\theoremstyle{thmstyleone}%
\newtheorem{theorem}{Theorem}
\newtheorem{proposition}[theorem]{Proposition}
\newtheorem{lemma}{Lemma}%
\newtheorem{corollary}{Corollary}
\theoremstyle{thmstyletwo}%
\newtheorem{example}{Example}%
\newtheorem{remark}{Remark}%
\theoremstyle{thmstylethree}%
\newtheorem{definition}{Definition}%
\begin{document}

\title[Bounds on Binary Niederreiter-Rosenbloom-Tsfasman LCD codes]{Bounds on Binary Niederreiter-Rosenbloom-Tsfasman LCD codes}


\author{\fnm{Welington} \sur{Santos}}\email{welington@vt.edu}

\affil{\orgdiv{Department of Mathematics} \orgname{Virginia Tech},
\orgaddress{\street{225 Stanger Street},
\city{Blacksburg}, \postcode{204061}, \state{VA}, \country{USA}}}


\abstract{Linear complementary dual codes (LCD codes) are codes whose intersections with their dual codes are trivial. These codes were introduced by Massey in 1992. LCD codes have wide applications in data storage, communication systems and cryptography. Niederreiter-Rosenbloom-Tsfasman LCD codes (NRT-LCD codes) were introduced by Heqian, Guangku and Wei as a generalization of LCD codes for the NRT metric space $M_{n,s}(\mathbb{F}_{q})$. In this paper, we study LCD$[n\times s,k]$, the maximum minimum NRT distance among all binary $[n\times s,k]$ NRT-LCD codes. We prove the existence (non-existence) of binary maximum distance separable NRT-LCD codes in $M_{1,s}(\mathbb{F}_{2})$. We present a linear programming bound for binary NRT-LCD codes in $M_{n,2}(\mathbb{F}_{2})$. We also give two methods to construct binary NRT-LCD codes.}

\keywords{linear complementary codes, Niederreiter-Rosenbloom-Tsfasman metric, linear programming bound}



\maketitle

\section{Introduction}\label{sec1}

An $[s,k]$ linear code $\mathcal{C}$ is called a Euclidean linear complementary dual (LCD for short) code if $\mathcal{C}\cap\mathcal{C}^{\perp_{E}}=\lbrace 0_{s}\rbrace$, where $\mathcal{C}^{\perp_{E}}$ is the Euclidean dual code of $\mathcal{C}$ and $0_{s}$ is the zero vector of length $s$. LCD codes were introduced by Massey \cite{Massey1992}, where LCD codes were shown to provide an optimum linear coding solution for the two-user binary adder channel and applied to the nearest-neighbor decoding problem. Moreover, in \cite{Massey1992} it was proved that asymptotically good LCD codes exist. In recent years, there has been an increasing interest in LCD codes for both theoretical and practical reasons (see \cite{Carlet2016,Carlet2019,Araya2019,Araya2020,Araya2020-2}). In particular, Carlet and Guilley showed in \cite{Carlet2016} that LCD codes are important in information protection and armoring implementations against side-channel attacks and fault non-invasive attacks. An important problem is to determine the largest minimum weight among all $[s,k]$ codes in a certain class of codes from a given pair $(s,k)$.

Coding theory has also been developed with respect to alternative metrics. One of those metrics is the Niederreiter Rosenbloom-Tsfasman (NRT for short) metric which was introduced in \cite{NRT} to model transmission over a set of parallel channels subject to fading. Independently, Niederreiter \cite{H.Niederreiter} pursued a maximization problem in finite vector spaces which turned out to be equivalent to coding theory problems in Niederreiter Rosenbloom-Tsfasman metric spaces, as was shown by Brualdi, Graves, and Lawrence \cite{brualdi}.

The NRT metric is more appropriate for the correction of random errors that occur with high probability in the first coordinates. Since its definition several coding-theoretic questions with respect to the Niederreiter Rosenbloom-Tsfasman metric have been investigated, such as bounds \cite{BargAndPurka,Quistorff}, weight distribution and MacWilliams Identities \cite{DS,Sharma,Siap}, self-dual codes \cite{Marka,InvariantSantos}, MDS codes \cite{4,MDS_Dougherty,Skriganov2001uniform-distributions}, burst error enumeration \cite{Jain,I.Siap}, BCH codes \cite{Zhou}, and decoding \cite{Ozen, Nielsen,FractionalNRTSantos}. 

Recently, Heqian, Guangkui, and Wei defined NRT-LCD codes in \cite{H.Xu}. In the same paper, the basic conditions for an NRT metric code be NRT-LCD are studied. The aim of this paper is to study $LCD[n\times s,k]$ which is the maximum of the possible minimum NRT distance among all binary $[n\times s,k]$ NRT-LCD codes $\mathcal{C}\subseteq M_{n,s}(\mathbb{F}_{2})$. 

This paper is organized as follows. Section \ref{Preliminaries} contains some basic facts and definitions about codes in the Niederreiter Rosenbloom-Tsfasman metric space and NRT-LCD codes. In Section \ref{1xs}, we study $LCD[1\times s,k]$. In Section \ref{nx2}, we present a linear programming bound for NRT-LCD codes in $M_{n,2}(\mathbb{F}_{2})$ and study $LCD[n\times 2,2]$. Finally, in Section \ref{constructions}, we present a new construction of binary NRT-LCD codes in $M_{1,s}(\mathbb{F}_{2})$ and a new construction of binary NRT-LCD codes in $M_{n,s}(\mathbb{F}_{2})$.

\section{Preliminaries}\label{Preliminaries}
\subsection{Niederreiter Rosenbloom-Tsfasman metric codes}

Let $M_{n,s}(\mathbb{F}_{q})$ denote the set of all $n\times s$ matrices over the finite field $\mathbb{F}_q$. 

Given a matrix $v=[v_{i,j}]_{n\times s}\in M_{n,s}(\mathbb{F}_{q})$, we write $v=[v_1;v_2;\ldots;v_n]$ or $v=[w_{1}\mid w_{2}\mid\cdots\mid w_{s}]$, where $v_{i}=[v_{i,1},\ldots,v_{i,s}]$ is the $i$-th row of $v$, and $w_{j}=[w_{1,j},w_{2,j},\ldots;w_{n,j}]$ is the $j$-th column of $v$.

The Niederreiter-Rosenbloom-Tsfasman (for short, NRT) weight of $v$ is given by

\begin{equation}
\rho(v)=\sum_{i=1}^{n}\rho(v_{i})
\end{equation} 
where,
\begin{eqnarray}
\rho(v_{i})&=\left\lbrace\begin{array}{cl}
\max\left\lbrace j:v_{i,j}\neq 0\right\rbrace&\text{if}\ v_{i}\neq 0\\
0&\text{if}\ v_{i}=0.
\end{array}\right.
\end{eqnarray} 
\noindent for $i=1,2,\ldots,n$.

The NRT metric is defined by $d_{N}(u,v)=\rho(u-v)$, where $u,v\in M_{n,s}(\mathbb{F}_{q})$. We note that for $s=1$, that is, for $v\in M_{n,1}\cong\mathbb{F}^{n}_{q}$ the NRT metric becomes the usual Hamming metric.

Given a subset $\mathcal{C}\subseteq M_{n,s}(\mathbb{F}_{q})$ the minimum NRT distance of $\mathcal{C}$ is defined by:
\begin{equation}
d_{N}(\mathcal{C})=\min\left\lbrace d_{N}(v,u):d_{N}(v,u)\neq 0\right\rbrace.
\end{equation}
We sometimes write $d_{N}$ rather than $d_{N}(\mathcal{C})$ if the code is clear from the context. 

Given $\mathcal{C}\subseteq M_{n,s}(\mathbb{F}_{q})$ it is easy to check that $d_{H}(\mathcal{C})\leq d_{N}(\mathcal{C})$, where $d_{H}$ is the minimum Hamming distance of $\mathcal{C}$.

\begin{definition}
An $[n\times s,k,d_{N}]$ NRT metric code is a linear subspace $\mathcal{C}\subseteq M_{n,s}(\mathbb{F}_{q})$ of dimension $k$ and minimum NRT distance $d_{N}$. 
\end{definition}
Note that the minimum distance of NRT metric code $\mathcal{C}\subseteq M_{n,s}(\mathbb{F}_{q})$ is given by $d_{N}(\mathcal{C})=\min\lbrace\rho(v):v\in\mathcal{C},v\neq 0\rbrace$. Moreover, we have the following singleton bound for $[n\times s,k,d_{N}]$ NRT metric codes 
\begin{equation}\label{Singletonbound}
 d_{N}(\mathcal{C})\leq ns-k+1.   
\end{equation}
An NRT metric code with $d_{N}(\mathcal{C})=ns-k+1$ is called maximum distance separable code.
\subsection{Niederreiter-Rosenbloom-Tsfasman LCD codes}
Niederreiter-Rosenbloom-Tsfasman linear complementary dual (NRT-LCD) codes were defined in \cite{H.Xu}. In order to study NRT-LCD codes, we introduce the following definitions.
\begin{definition}
The dual code $\mathcal{C}^{\perp_{N}}$ of an $[n\times s,k,d_{N}]$ NRT metric code $\mathcal{C}$ is defined as $$\mathcal{C}^{\perp_{N}}=\lbrace u\in M_{n,s}(\mathbb{F}_{q})\mid\langle u,v\rangle_{N}=0\text{ for all } v\in\mathcal{C}\rbrace,$$ 
where for $u=[u_1;\ldots;u_{n}],v=[v_{1};\ldots;v_{n}]\in M_{n,s}(\mathbb{F}_{q})$
\begin{equation}
\langle u,v\rangle_{N}=\sum_{i=1}^{n}\langle u_{i},v_{i}\rangle_{N}\text{, }\langle u_{i},v_{i}\rangle_{N}=\sum_{j=1}^{s}u_{i,j}u_{i,s-j+1}.
\end{equation}
\end{definition}

\begin{definition}(\cite{H.Xu})
An $[n\times s,k,d_{N}]$ NRT metric code $\mathcal{C}$ is called \textit{NRT linear complementary dual code (NRT-LCD code for short)} if $\mathcal{C}\cap\mathcal{C}^{\perp_{N}}=\lbrace 0\rbrace$.  
\end{definition}

\begin{definition}(\cite{Marka})
Let $A=\left[a_{ij}\right]\in M_{k,s}(\mathbb{F}_{q})$. The flip of $A$ is defined by $flip(A)=[a_{il}]$, where $l=s-j+1$ for $1\leq i\leq k$ and $1\leq j\leq s$. The transpose of  $flip(A)$ is denoted by $A^{\circ}$.
\end{definition}

\begin{definition}(\cite{InvariantSantos})
Let $A=[A_{1}\mid A_{2}\mid\cdots\mid A_{n}]\in M_{k,ns}(\mathbb{F}_{q})$, where $A_{i}$ is a $k\times s$ matrix for $i=1,2,\ldots,n$. The ordered flip of $A$ is defined by
\begin{equation*}
Oflip(A)=[flip(A_{1})\mid flip(A_{2})\mid\cdots\mid flip(A_{n})].
\end{equation*}
The transpose of $Oflip(A)$ is denoted by
\begin{equation}
A^{\dag}=[Oflip(A)]^{T}=\left[\begin{array}{c}
 A^{\circ}_{1}\\
 A^{\circ}_{2}\\
 \vdots\\
 A^{\circ}_{n}
\end{array}\right].
\end{equation}
Note that, if $n=1$ then, $A^{\dag}=A^{\circ}$.
\end{definition}

\begin{example}
Let $A\in M_{3,4}(\mathbb{F}_{q})$ be any matrix given by
\begin{equation*}
A=\left[\begin{array}{cccc}
a_{1,1}&a_{1,2}&a_{1,3}&a_{1,4}\\
a_{2,1}&a_{2,2}&a_{2,3}&a_{2,4}\\
a_{3,1}&a_{3,2}&a_{3,3}&a_{3,4}
\end{array}\right].
\end{equation*}
Then, $flip(A)$ and $A^{\circ}$ are given by
\begin{equation*}
flip(A)=\left[\begin{array}{cccc}
a_{1,4}&a_{1,3}&a_{1,2}&a_{1,1}\\
a_{2,4}&a_{2,3}&a_{2,2}&a_{2,1}\\
a_{3,4}&a_{3,3}&a_{3,2}&a_{3,1}
\end{array}\right]\text{, }A^{\circ}=\left[\begin{array}{ccc}
a_{1,4}&a_{2,4}&a_{3,4}\\
a_{1,3}&a_{2,3}&a_{3,3}\\
a_{1,2}&a_{2,2}&a_{3,2}\\
a_{1,1}&a_{2,1}&a_{3,1}
\end{array}\right].
\end{equation*}
Let $A\in M_{3,4}(\mathbb{F}_{q})$ be any matrix write as
\begin{equation*}
A=\left[\begin{array}{cc|cc}
a_{1,1}&a_{1,2}&a_{1,3}&a_{1,4}\\
a_{2,1}&a_{2,2}&a_{2,3}&a_{2,4}\\
a_{3,1}&a_{3,2}&a_{3,3}&a_{3,4}
\end{array}\right].
\end{equation*}
Then, $Oflip(A)$ and $A^{\dag}$ are given by
\begin{equation*}
Oflip(A)=\left[\begin{array}{cc|cc}
a_{1,2}&a_{1,1}&a_{1,4}&a_{1,3}\\
a_{2,2}&a_{2,1}&a_{2,4}&a_{2,3}\\
a_{3,2}&a_{3,1}&a_{3,4}&a_{3,3}
\end{array}\right]\text{, }A^{\dag}=\left[\begin{array}{ccc}
a_{1,2}&a_{2,2}&a_{3,2}\\
a_{1,1}&a_{2,1}&a_{3,1}\\
\hline
a_{1,4}&a_{2,4}&a_{3,4}\\
a_{1,3}&a_{2,3}&a_{3,3}
\end{array}\right].
\end{equation*}
\end{example}

In \cite{MarceloStandard}, Muniz proved that the generator matrix of an $[n\times s,k]$ NRT metric code $\mathcal{C}\subseteq M_{n,s}(\mathbb{F}_{q})$ is equivalent to an NRT-metric code with a special generator matrix.

\begin{theorem}\label{marceloS}(\cite{MarceloStandard}, Theorem 3.5). Let $\mathcal{C}\subseteq M_{n,s}(\mathbb{F}_{q})$ be an $[n\times s,k]$ NRT-metric code. Then $\mathcal{C}$ is equivalent to a code that has a generator matrix 
\begin{equation}
G=[G_{1}\mid G_{2}\mid\cdots\mid G_{n}],
\end{equation}
where
\begin{itemize}
    \item[i)] The nonzero rows of $G_{1}$ are distinct canonical vectors, arranged in order of increasing NRT weight.
    \item[ii)] $G$ is in the \textit{block echelon form}, i.e., if the last $t$ rows of $G_i$ are zero, then the last $t$ rows of $G_{1},\ldots,G_{i-1}$ are also zero.
\end{itemize}
\end{theorem}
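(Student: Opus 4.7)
The plan is to induct on $n$, combining row operations on $G$ (which leave $\mathcal{C}$ unchanged) with NRT-isometries applied on the right (which replace $\mathcal{C}$ by an equivalent code). The relevant linear isometries are permutations of the $n$ row-blocks and, within any given block, scaling a column or performing ``column $j\leftarrow$ column $j+\lambda\cdot$ column $i$'' with $i>j$. Each within-block operation preserves $\rho$ on every row because it alters a coordinate strictly to the left of the rightmost nonzero entry; equivalently, it corresponds to right multiplication on that block by an invertible lower triangular $s\times s$ matrix.

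For the base case $n=1$ I would carry out a right-to-left Gaussian reduction on $G_1$. With $r_1=\mathrm{rank}(G_1)$ and $j_{r_1}$ the index of the rightmost nonzero column, use row operations to place a single $1$ in column $j_{r_1}$, then use the permitted column operations ``column $j\leftarrow$ column $j+\lambda\cdot$ column $j_{r_1}$'' with $j<j_{r_1}$ to clear the rest of that row; that row becomes $e_{j_{r_1}}$, and all other rows are untouched because column $j_{r_1}$ has been zeroed out in them. Iterating on the remaining rows and columns $1,\ldots,j_{r_1}-1$ yields $r_1$ distinct canonical vectors $e_{j_1},\ldots,e_{j_{r_1}}$ with $j_1<\cdots<j_{r_1}$, together with $k-r_1$ zero rows; a row permutation finally places the nonzero rows at the top in the required increasing order.

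For the inductive step, first apply the base case to $G_1$. The last $k-r_1$ rows then have zero $G_1$-part and, as a $(k-r_1)\times(n-1)s$ matrix, generate $\mathcal{C}_0=\{c\in\mathcal{C}:c_1=0\}$ viewed as an NRT code in $M_{n-1,s}(\mathbb{F}_q)$. Apply the inductive hypothesis to $\mathcal{C}_0$: the within-block column isometries and the possible permutation of blocks $2,\ldots,n$ that it requires do not touch block $1$, and the associated row operations combine only rows with zero $G_1$-part, so the canonical form of $G_1$ is preserved and condition (i) continues to hold. Writing $z'_j$ for the trailing-zero-row count of $\mathcal{C}_0$'s $j$-th block, the inductive hypothesis gives $z'_1\geq\cdots\geq z'_{n-1}$; one then verifies $z_1=k-r_1$ and $z_i=z'_{i-1}$ for $i\geq 2$ in the non-degenerate case $z'_{i-1}<k-r_1$, yielding $z_1\geq z_2\geq\cdots\geq z_n$.

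The main delicate point is the degenerate case $z'_{i-1}=k-r_1$, where $\mathcal{C}_0$ projects trivially onto block $i$ and only the top $r_1$ rows can contribute nonzero entries to $G_i$. The block-permutation freedom included in the inductive hypothesis is essential here: an initial reordering pushes any such block past all blocks on which $\mathcal{C}_0$ is nontrivial, so that after the reduction every block whose bottom $k-r_1$ rows are entirely zero appears to the right of every block in which $\mathcal{C}_0$ is nonzero, preserving the monotonicity of the $z_i$ and hence (ii).
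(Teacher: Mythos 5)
There is a genuine gap. (For calibration: the paper itself offers no proof of this statement --- it is quoted from \cite{MarceloStandard}, Theorem 3.5 --- so your argument must stand on its own.) Your isometry toolkit, your base case, and your non-degenerate bookkeeping ($z_1=k-r_1$, and $z_i=z'_{i-1}$ when $z'_{i-1}<k-r_1$) are all sound, and you correctly isolate the degenerate blocks as the crux; but your resolution of that case runs in the wrong direction. Condition (ii) makes the trailing-zero counts nonincreasing from left to right, so applying (ii) with $t=k-r_1$ shows that every block whose bottom $k-r_1$ rows vanish must lie to the \emph{left} of every block whose bottom part is nonzero; equivalently, the inductive standard form of $\mathcal{C}_0$ places its all-zero blocks leftmost, since they carry the maximal count $z'=k-r_1$. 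Your final configuration --- degenerate blocks to the right of the blocks where $\mathcal{C}_0$ is nonzero --- therefore contradicts the monotonicity $z'_1\geq\cdots\geq z'_{n-1}$ that you yourself invoke, and it produces a count sequence of the shape $(k-r_1,\ \text{smaller values},\ \geq k-r_1)$, which is not block echelon.

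Moreover, the difficulty cannot be repaired by any reordering of blocks $2,\ldots,n$ alone. Even with the degenerate blocks correctly placed immediately after block $1$, condition (ii) additionally requires $t_i=0$, i.e.\ row $r_1$ must be nonzero in each degenerate block, and none of your available moves can arrange this once $G_1$ is frozen: permuting or combining the top $r_1$ rows destroys (i), while adding bottom rows to top rows is vacuous precisely on degenerate blocks. Concretely, take $\mathcal{C}=\mathrm{span}\lbrace [1\,0\mid 1\,0],\,[0\,1\mid 0\,0]\rbrace\subseteq M_{2,2}(\mathbb{F}_{2})$. Here $r_1=k=2$, so $\mathcal{C}_0=0$, every block is degenerate, and your algorithm terminates with $G=\left[\begin{array}{cc|cc}1&0&1&0\\0&1&0&0\end{array}\right]$, which violates (ii) at $i=2$, $t=1$; swapping the rows breaks the increasing arrangement in (i), and nothing else is left to do. The actual standard form is $\left[\begin{array}{cc|cc}1&0&1&0\\0&0&0&1\end{array}\right]$, obtained by letting the \emph{original second} block play the role of $G_1$, whose rank then drops from $2$ to $1$. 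So the ``reduce block $1$ first, then recurse on $\mathcal{C}_0$'' induction cannot close as stated: which block is put into canonical form, and indeed the whole block order, must be chosen adaptively (according to the subcodes of codewords vanishing on prescribed sets of blocks) \emph{before} the reduction, and this is the idea your proposal is missing.
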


Niederreiter-Rosenbloom-Tsfasman Linear complementary dual codes were characterized in terms of generator matrix by Heqian, Guangkui, and Wei in \cite{H.Xu}.

\begin{lemma}\label{GGo}(\cite{H.Xu})
Let $\mathcal{C}$ be an $[n\times s,k,d_{N}]$ NRT metric code. Let $G\in M_{k,ns}(\mathbb{F}_{q})$ be a generator matrix of $\mathcal{C}$
write as
\begin{equation}
G=[G_{1}\mid G_{2}\mid\cdots\mid G_{n}],
\end{equation}
where $G_{i}\in M_{k,s}(\mathbb{F}_{q})$ for $i=1,2,\ldots,n$. Then $\mathcal{C}$ is an NRT-LCD if and only if $GG^{\dag}$ is nonsingular.
\end{lemma}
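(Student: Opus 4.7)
The plan is to identify $GG^{\dag}$ as the Gram matrix of the rows of $G$ with respect to the NRT inner product, and then apply the standard linear algebra argument for complementary duality. First I would unpack the definition of $G^{\dag}$. Flipping the columns of each block $G_i$ is the same as right multiplication by the $s\times s$ anti-diagonal (reversal) matrix $J_s$, so $Oflip(G)=G\cdot J$, where $J=\mathrm{diag}(J_s,\ldots,J_s)$ is the $ns\times ns$ block-diagonal reversal. Since $J^{T}=J$, this gives $G^{\dag}=JG^{T}$ and hence
\begin{equation*}
GG^{\dag}=GJG^{T}.
\end{equation*}
A direct coordinate check then shows that the $(a,b)$-entry of $GJG^{T}$ equals $\sum_{i=1}^{n}\sum_{j=1}^{s}g^{(a)}_{i,j}g^{(b)}_{i,s-j+1}=\langle g^{(a)},g^{(b)}\rangle_{N}$, where $g^{(a)}$ denotes the $a$-th row of $G$ viewed as an element of $M_{n,s}(\mathbb{F}_{q})$.

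With $GG^{\dag}$ identified as the Gram matrix of the $k$ rows of $G$ under $\langle\cdot,\cdot\rangle_{N}$, I would argue both directions. For the forward direction, suppose $\mathcal{C}$ is NRT-LCD and that $x\in\mathbb{F}_{q}^{k}$ satisfies $x^{T}GG^{\dag}=0$. Set $c=x^{T}G\in\mathcal{C}$. Then for every row $g^{(b)}$ of $G$ we have $\langle c,g^{(b)}\rangle_{N}=(x^{T}GG^{\dag})_{b}=0$, and by bilinearity $\langle c,v\rangle_{N}=0$ for all $v\in\mathcal{C}$, so $c\in\mathcal{C}\cap\mathcal{C}^{\perp_{N}}=\{0\}$. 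Since the rows of $G$ are $\mathbb{F}_{q}$-linearly independent, $c=0$ forces $x=0$, proving $GG^{\dag}$ is nonsingular. Conversely, if $GG^{\dag}$ is nonsingular and $c\in\mathcal{C}\cap\mathcal{C}^{\perp_{N}}$, write $c=x^{T}G$ for a unique $x\in\mathbb{F}_{q}^{k}$. The condition $c\in\mathcal{C}^{\perp_{N}}$ applied to each row $g^{(b)}$ gives $\langle c,g^{(b)}\rangle_{N}=0$, i.e.\ $x^{T}GG^{\dag}=0$, and nonsingularity yields $x=0$, hence $c=0$.

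The main obstacle is essentially the bookkeeping in the first step: one must verify that the somewhat unusual reversal-within-blocks that appears in the definitions of $flip$, $Oflip$, and the NRT inner product really do compose to the single block-diagonal matrix $J$, so that $GG^{\dag}$ literally equals $GJG^{T}$. Once that identification is in place, the characterization follows from the classical fact that a symmetric (or here, possibly non-symmetric but square) bilinear form is nondegenerate on a subspace if and only if its Gram matrix in any basis is nonsingular, combined with the equivalence between membership in $\mathcal{C}^{\perp_{N}}$ and orthogonality to a generating set.
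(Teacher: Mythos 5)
Your proof is correct: writing $Oflip(G)=GJ$ with $J=\mathrm{diag}(J_{s},\ldots,J_{s})$ block-diagonal and symmetric gives $GG^{\dag}=GJG^{T}$, which is exactly the Gram matrix of the rows of $G$ under $\langle\cdot,\cdot\rangle_{N}$, and your two directions are the standard Massey-style argument (a vector in the left kernel of the Gram matrix corresponds to a codeword in $\mathcal{C}\cap\mathcal{C}^{\perp_{N}}$, and conversely). The paper itself states this lemma without proof, citing \cite{H.Xu}, and your argument is precisely the expected adaptation of Massey's $GG^{T}$ criterion for Euclidean LCD codes to the NRT bilinear form, so it matches the intended route.
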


\begin{example}
Let $\mathcal{C}$ be the binary $[1\times 4,2]$ code generate by 
\begin{equation*}
G=\left[\begin{array}{cccc}
1&0&1&1\\
0&1&1&1
\end{array}\right].
\end{equation*}
If $\mathcal{C}$ is considered as a linear code in the Hamming metric space $\mathbb{F}^{4}_{2}$. It can be check that $GG^{T}$ is nonsingular and then $\mathcal{C}$ is an Euclidean LCD code. However, $GG^{\dag}=0$ is singular. By Lemma \ref{GGo}, $\mathcal{C}$ is not an NRT-LCD code.
\end{example}

\begin{example}
Let $\mathcal{C}$ be the binary $[1\times 4,2]$ code generate by 
\begin{equation*}
G=\left[\begin{array}{cccc}
1&0&0&0\\
0&0&1&1
\end{array}\right].
\end{equation*}
If $\mathcal{C}$ is considered as a linear code in the Hamming metric space $\mathbb{F}^{4}_{2}$. It can be check that $GG^{T}=\left[\begin{array}{cc}
1&0\\
0&0 
\end{array}\right]$ and $GG^{T}$ is singular. That is, $\mathcal{C}$ is not an Euclidean LCD code. However, $GG^{\circ}=\left[\begin{array}{cc}
0&1\\
1&0 
\end{array}\right]$ and $GG^{\circ}$ is nonsingular. By Lemma \ref{GGo}, $\mathcal{C}$ is an NRT-LCD code.
\end{example}
\begin{definition}
Let $n,s,k$ be positive intergers. We define
\begin{equation}
LCD[n\times s,k]:=\max\lbrace d_{N}\mid\text{ there exist a binary $[n\times s,k]$ NRT-LCD code $\mathcal{C}\subseteq M_{n,s}(\mathbb{F}_{2})$}\rbrace.
\end{equation}
\end{definition}
\begin{remark}
Note that if $s=1$, then $$LCD[n\times 1,k]:=\max\lbrace d_{H}\mid \text{there exist a binary $[n,k]$ LCD Hamming metric code }\mathcal{C}\subseteq\mathbb{F}^{n}_{2}\rbrace.$$
So, we will focus on cases where $s>1$.
\end{remark}

\section{Elementary bounds on binary NRT-LCD codes in $M_{1,s}(\mathbb{F}_{2})$}\label{1xs}

Ozen and Siap in \cite{Ozen}, defined the generator matrix in the standard form for $[1\times s,k,d_{N}]$ NRT-metric codes $\mathcal{C}\subseteq M_{1,s}(\mathbb{F}_{q})$ as the following.
\begin{definition}(\cite{Ozen})\label{standard form}
Let $\mathcal{C}\subseteq M_{1,s}(\mathbb{F}_{q})$ be an $[1\times s,k,d_{N}]$ NRT-metric code and $G$ the generator matrix of $\mathcal{C}$. Applying certain elementary row operations one can always transform $G$ into the following form:
\begin{equation}\label{standardForm}
G^{\prime}=\left[\begin{array}{ccccccccccccc}
g_{1,1}&\cdots&g_{1,d_{1}-1}&g_{1,d_{1}}&0&\cdots&0&0&0&\cdots&0&\cdots&0\\
g_{2,1}&\cdots&g_{2,d_{1}-1}&0&g_{2,d_{1}+1}&\cdots&g_{1,d_{2}-1}&g_{2,d_{2}}&0&\cdots&0&\cdots&0\\
\vdots&\ddots&\vdots&\vdots&\vdots&\ddots&\vdots&\vdots&\vdots&\vdots&\vdots&\ddots&\vdots\\
g_{k,1}&\cdots&g_{k,d_{1}-1}&0&g_{k,d_{1}+1}&\cdots&g_{k,d_{2}-1}&0&g_{k,d_{2}+1}&\cdots&g_{k,d_{k}}&\cdots&0
\end{array}\right],
\end{equation}
where $g_{i,d_{i}}=1$, $d_{j,d_{i}}=0$ for $i\neq j$ and $\lbrace d_{1},d_{2},\ldots,d_{k}\rbrace$ is the set of $k$ possible nonzero NRT-weights so that $d_{N}=d_{1}<d_{2}<\ldots<d_{k}$.

Any generator matrix of the linear code $\mathcal{C}$ is equivalent to $G^{\prime}$. This $G^{\prime}$ is called the generator matrix in \textit{standard form}. A linear code having $G^{\prime}$ as its generator matrix in standard form is said to be of type $(d_{1},d_{2},\ldots,d_{k})$.
\end{definition}

\begin{remark}
Throughout this paper, we assume that $\mathcal{C}\subseteq M_{1,s}(\mathbb{F}_{q})$ always contains a codeword with full NRT-weight $s$, where $s$ is the length of the code. That is, the type of $\mathcal{C}$ is $(d_{N},d_{2},\ldots,s)$.
\end{remark}
\begin{lemma}
For any $s,k$ greater than zero,
\begin{equation}
LCD[1\times(s+2),k]\geq LCD[1\times s,k]+1.
\end{equation}
\end{lemma}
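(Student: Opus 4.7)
The plan is to show that any binary $[1\times s, k, d]$ NRT-LCD code can be extended to a binary $[1\times(s+2), k, d+1]$ NRT-LCD code; applied to a code attaining $d = LCD[1\times s, k]$, this immediately yields the lemma.

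The construction is to pad the generator matrix with a single zero column on each side. Let $G \in M_{k,s}(\mathbb{F}_2)$ be a generator matrix of an $[1\times s, k, d]$ NRT-LCD code $\mathcal{C}$, and set
$$G' = [\mathbf{0} \mid G \mid \mathbf{0}] \in M_{k, s+2}(\mathbb{F}_2),$$
where $\mathbf{0}$ denotes the zero column of length $k$. Adjoining zero columns preserves the rank, so $G'$ generates an $[1\times(s+2), k]$ code $\mathcal{C}'$. Every nonzero codeword of $\mathcal{C}'$ has the form $(0, v_1, \ldots, v_s, 0)$ for some $0 \neq v \in \mathcal{C}$; the trailing zero has no effect on the NRT weight, but the leading zero shifts the rightmost nonzero coordinate one position to the right, giving $\rho((0, v, 0)) = \rho(v) + 1$. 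Therefore $d_N(\mathcal{C}') = d + 1$.

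It remains to check that $\mathcal{C}'$ is NRT-LCD. Since $n=1$, Lemma \ref{GGo} reduces this to showing that $G'(G')^\circ$ is nonsingular, and the key identity is
$$G'(G')^\circ = GG^\circ.$$
Indeed, expanding
$$(G'(G')^\circ)_{i,i'} = \sum_{\ell=1}^{s+2} G'_{i,\ell}\, G'_{i', s+3-\ell},$$
the terms with $\ell = 1$ and $\ell = s+2$ vanish because they involve an entry from one of the flanking zero columns, while the remaining terms with $\ell \in \{2, \ldots, s+1\}$ reduce, via the substitution $m = \ell - 1$, to $\sum_{m=1}^{s} g_{i,m} g_{i', s+1-m} = (GG^\circ)_{i,i'}$. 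The main (and essentially only) obstacle is this index bookkeeping in the NRT form: one must verify that the reflection $\ell \mapsto s+3-\ell$ maps the boundary $\{1, s+2\}$ to itself (forcing those terms to vanish) and shifts the bulk sum back to the original range $\{1, \ldots, s\}$, after which the conclusion follows from $GG^\circ$ being nonsingular by hypothesis.
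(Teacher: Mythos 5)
Your proof is correct and takes essentially the same approach as the paper: pad the generator matrix on the left with a zero column so every NRT weight increases by one, verify the key identity $G'(G')^{\circ}=GG^{\circ}$, and conclude via Lemma \ref{GGo}. The only (cosmetic) difference is that the paper appends the column $e_{k}$ rather than a zero column on the right, so that the extended matrix remains in standard form with a full-NRT-weight codeword per the paper's standing convention; that column likewise contributes nothing to the product, so both computations reduce to the same nonsingular matrix $GG^{\circ}$.
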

\begin{proof}
Let $G$ be a generator matrix in the standard form of an $[1\times s,k,d_{N}]$ NRT-LCD code $\mathcal{C}$. Then $GG^{\circ}$ is nonsingular since $\mathcal{C}$ is NRT-LCD. Let $\tilde{G}=\left[\begin{array}{c|c|c}
0&G&e_{k}
\end{array}\right]$, where $e^{T}_{k}$ is the given by $e^{T}_{k}=\left[\begin{array}{cccccc}
0&0&\cdots&0&1
\end{array}\right]$. Note that, $\tilde{G}$ is a generator matrix in the standard form of an $[1\times (s+2),k,d_{N}+1]$ NRT-metric code $\tilde{\mathcal{C}}$. Moreover, $$flip(\tilde{G})=\left[\begin{array}{c|c|c}
e_{k}&flip(G)&0
\end{array}\right]\text{ and }\tilde{G}^{\circ}=\left[\begin{array}{c}
e^{T}_{k}\\
\hline
G^{\circ}\\
\hline
0
\end{array}\right].$$
So,
\begin{equation}
\tilde{G}\tilde{G}^{\circ}=\left[\begin{array}{c|c|c}
0&G&e_{k}
\end{array}\right]\left[\begin{array}{c}
e^{T}_{k}\\
\hline
G^{\circ}\\
\hline
0
\end{array}\right]=GG^{\circ}.
\end{equation}
Hence $\tilde{G}\tilde{G}^{\circ}$ is nonsingular and $\tilde{G}$ generates an $[1\times (s+2),k,d_{N}+1]$ NRT-LCD code.
\end{proof}

\begin{proposition} For $s$ odd, and dimension $k=1$ or $k=s-1$ we have:
\begin{itemize}
\item[i)] $LCD[1\times s,1]=s$;
\item[ii)] $LCD[1\times s,s-1]=1$.
\end{itemize}
\end{proposition}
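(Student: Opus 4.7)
For (i), the singleton bound (\ref{Singletonbound}) immediately gives $LCD[1\times s,1]\le s$, so the content is a matching explicit construction. With $k=1$, the matrix $GG^{\circ}$ appearing in Lemma \ref{GGo} collapses to the scalar $\sum_{j=1}^{s}v_{j}v_{s-j+1}$; for $s$ odd, the off-middle pairs $(j,s-j+1)$ with $j\neq (s+1)/2$ each appear twice in the sum and therefore vanish in $\mathbb{F}_{2}$, leaving only the middle term $v_{(s+1)/2}^{2}=v_{(s+1)/2}$. Choosing the generator $v=e_{(s+1)/2}+e_{s}$ then realises simultaneously $\rho(v)=s$ (so $d_{N}=s$) and $v_{(s+1)/2}=1$ (so $GG^{\circ}=1$), producing a binary $[1\times s,1,s]$ NRT-LCD code via Lemma \ref{GGo}. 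This proves (i).

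For (ii), the singleton bound only yields $d_{N}\le 2$, so the real work is to rule out $d_{N}=2$. My plan is to parametrize a hypothetical $[1\times s,s-1,2]$ NRT-LCD code $\mathcal{C}$ by its (unique up to scalar) Euclidean parity-check $h\in\mathbb{F}_2^{s}$. A short calculation using the substitution $\tilde u_j=u_{s-j+1}$ shows $\mathcal{C}^{\perp_{N}}=\langle flip(h)\rangle$, so the NRT-LCD condition becomes $\langle h,flip(h)\rangle=\sum_{j}h_{j}h_{s-j+1}\neq 0$, which for $s$ odd collapses by the same pairing argument to $h_{(s+1)/2}=1$. The minimum-distance condition $d_{N}\ge 2$ forces $e_{1}\notin\mathcal{C}$, equivalently $h_{1}=1$. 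I would then combine these parity-check constraints with the standing assumption from the remark that $\mathcal{C}$ contains a codeword of full NRT-weight $s$, together with the standard-form generator structure of Definition \ref{standard form}, to derive a contradiction forcing $d_{N}=1$. For the lower bound $LCD[1\times s,s-1]\ge 1$, I would exhibit a concrete $[1\times s,s-1,1]$ NRT-LCD code obtained from a standard-form generator whose free first-column parameter at position $(s-1)/2$ is set to $1$, since the same pairing argument shows that this is exactly what makes the relevant $GG^{\circ}$ non-singular.

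The main obstacle is the incompatibility step at the end of (ii). The bare scalar conditions $h_{1}=1$ and $h_{(s+1)/2}=1$ are individually innocuous and, as positions, distinct for $s\ge 3$, so the proposed upper bound cannot come from these two scalar identities alone; the obstruction must instead arise from a more global structural constraint, most plausibly from the block-echelon normal form of Theorem \ref{marceloS} together with the specific type pattern $(2,3,\dots,s)$ that a $d_{N}=2$ code in dimension $s-1$ must carry. Isolating that constraint cleanly, and showing that under it no parity-check $h$ can simultaneously satisfy $h_{1}=1$ and $h_{(s+1)/2}=1$, is where I expect the bulk of the delicacy of (ii) to reside.
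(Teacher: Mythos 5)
Your part (i) is correct and is essentially the paper's own argument: the paper also reduces everything to Lemma \ref{GGo} with a rank-one explicit generator (it uses the all-ones row, for which $GG^{\circ}=s\equiv 1\pmod 2$; your pairing argument with $e_{(s+1)/2}+e_{s}$ is the same one-line verification), paired with the Singleton bound (\ref{Singletonbound}).

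For part (ii), however, your plan cannot be completed, and your own computation already tells you why. The printed value $LCD[1\times s,s-1]=1$ is an error in the statement: the paper's own proof of (ii) does not rule out $d_{N}=2$ at all, but instead \emph{constructs} a binary $[1\times s,s-1,2]$ NRT-LCD code (generator rows $e_{1}+e_{j+1}$ for $1\leq j\leq s-1$, for which $G_{2}G_{2}^{\circ}$ is nonsingular precisely when $s$ is odd) and combines it with the Singleton bound $d_{N}\leq 2$, so what is actually proved is $LCD[1\times s,s-1]=2$. This is also forced by internal consistency: for $s$ odd, $k=s-1$ is even, and Theorem \ref{MDSexistenceforkeven} gives $LCD[1\times s,s-1]=s-(s-1)+1=2$. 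Your parity-check analysis is sound --- $\mathcal{C}^{\perp_{N}}=\langle flip(h)\rangle$, the LCD condition is $h_{(s+1)/2}=1$, and $d_{N}\geq 2$ is equivalent to $h_{1}=1$ --- but the correct inference from the compatibility of these two conditions is not that a hidden ``global structural constraint'' must be found; it is that no such obstruction exists. Taking $h$ to be the all-ones vector (which satisfies both conditions) gives the even-weight code, which is NRT-LCD for odd $s$ (its NRT dual is $\langle 1,1,\ldots,1\rangle$, which has odd Hamming weight and hence is not even-weight) and has $d_{N}=2$, disproving the claimed value $1$. So the proposed endgame via the block-echelon form of Theorem \ref{marceloS} and the type pattern would be chasing a phantom; the repair is to invert your conclusion and let your parity-check argument stand as a short proof that $LCD[1\times s,s-1]=2$ for odd $s$, in agreement with the paper's construction.
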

\begin{proof}
Singleton bound (\ref{Singletonbound}) says that any $[1\times s,1]$ NRT metric code has minimum distance upper bounded by $s$. We will prove that there exist an $[1\times s,1]$ NRT-LCD code $\mathcal{C}_{1}$ such that $d_{N}(\mathcal{C}_{1})=s$. Let $\mathcal{C}_{1}$ be the binary $[1\times s,1,s]$ NRT metric code with generator matrix given by $$G_{1}=\left[\begin{array}{ccc}
1&\cdots&1
\end{array}\right].$$
Note that $G_{1}^{\circ}=\left[\begin{array}{c}
1\\
\vdots\\
1\end{array}\right]$. So, $G_{1}G_{1}^{\circ}=1$ since $s$ is odd. That is, $G_{1}G_{1}^{\circ}$ is nonsingular and
$\mathcal{C}$ is an $[1\times s,1,s]$ NRT-LCD code. Proving that $LCD[1\times s,1]=s$.

Again, Singleton bound (\ref{Singletonbound}) says that any $[1\times s,s-1]$ NRT metric code has minimum distance upper bounded by $2$. We will prove that there exists an $[1\times s,s-1]$ NRT-LCD code $\mathcal{C}_{2}$ such that $d_{N}(\mathcal{C}_{2})=2$. Let $\mathcal{C}_{2}$ be the binary $[1\times s,s-1]$ NRT metric code with generator matrix given by the following standard form generator matrix
\begin{equation*}
G_{2}=\left[\begin{array}{cccccc}
1&1&0&0&\ldots&0\\
1&0&1&0&\ldots&0\\
1&0&0&1&\ldots&0\\
\vdots&\vdots&\vdots&\vdots&\ddots&\vdots\\
1&0&0&0&\ldots&1\\
\end{array}\right].
\end{equation*}
Then,
\begin{equation*}
G^{\circ}_{2}=\left[\begin{array}{ccccc}
0&0&0&\ldots&1\\
\vdots&\vdots&\vdots&\ddots&\vdots\\
0&0&1&\ldots&0\\
0&1&0&\ldots&0\\
1&0&0&\ldots&0\\
1&1&1&\ldots&1\\
\end{array}\right].
\end{equation*}
We can check that $G_{2}G^{\circ}_{2}$ is nonsingular and $\mathcal{C}_{2}$ is a binary $[1\times s,s-1,2]$ NRT-LCD code.
\end{proof}

\begin{theorem}
If $s$ is even, then there is no binary $[1\times s,1]$ NRT-LCD code.
\end{theorem}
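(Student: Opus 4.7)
The plan is to note that a $[1\times s,1]$ binary NRT code is generated by a single nonzero row vector, so by Lemma \ref{GGo} the NRT-LCD condition reduces to the scalar $GG^{\circ}$ being nonzero in $\mathbb{F}_{2}$. I would then expand this scalar and show that for even $s$ the pairing $j\leftrightarrow s-j+1$ forces the sum to vanish, regardless of the choice of generator.

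First I would fix a generator $G=[g_{1}\;g_{2}\;\cdots\;g_{s}]\in M_{1,s}(\mathbb{F}_{2})$. By the definition of $(\,\cdot\,)^{\circ}$, $G^{\circ}$ is the $s\times 1$ column $[g_{s};g_{s-1};\ldots;g_{1}]^{T}$, so
\begin{equation*}
GG^{\circ}=\sum_{j=1}^{s}g_{j}g_{s-j+1}\in\mathbb{F}_{2}.
\end{equation*}
By Lemma \ref{GGo}, the code generated by $G$ is NRT-LCD if and only if this $1\times 1$ matrix is nonsingular, i.e.\ equal to $1$.

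Next I would analyze the sum via the involution $j\mapsto s-j+1$ on $\{1,\ldots,s\}$. The fixed points of this involution are exactly those $j$ with $2j=s+1$, which has no solution when $s$ is even. Thus for even $s$ the index set splits into $s/2$ disjoint pairs $\{j,s-j+1\}$, and each pair contributes $g_{j}g_{s-j+1}+g_{s-j+1}g_{j}=2g_{j}g_{s-j+1}=0$ in $\mathbb{F}_{2}$. Hence $GG^{\circ}=0$ for every choice of $G$, so no binary $[1\times s,1]$ code can satisfy the NRT-LCD criterion of Lemma \ref{GGo}.

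There is no real obstacle here; the only thing to be careful about is distinguishing the even and odd cases of the involution, which is exactly the reason the companion statement $LCD[1\times s,1]=s$ worked for odd $s$ (the unique fixed point $j=(s+1)/2$ contributes a single $g_{j}^{2}=g_{j}$ term that can be $1$). For even $s$ that fixed point disappears and the argument above closes the case.
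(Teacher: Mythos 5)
Your proposal is correct and follows essentially the same route as the paper: compute the $1\times 1$ matrix $GG^{\circ}=\sum_{j=1}^{s}g_{j}g_{s-j+1}$ and observe that for even $s$ the fixed-point-free involution $j\mapsto s-j+1$ pairs the terms so that each pair contributes $2g_{j}g_{s-j+1}=0$ in $\mathbb{F}_{2}$, forcing $GG^{\circ}=0$. If anything, your write-up is slightly more careful than the paper's, whose displayed sum $2\sum_{i=1}^{s}x_{i}x_{s-i+1}$ double-counts the pairs (it should run over $s/2$ pairs), though this does not affect the conclusion in characteristic $2$.
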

\begin{proof}
Let $\mathcal{C}$ be a binary $[1\times s,1]$ NRT metric code. So, any generator matrix $G$ of $\mathcal{C}$ has the following form
\begin{equation*}
G=\left[\begin{array}{ccccc}
x_1&x_2&\cdots&x_{s-1}&x_s
\end{array}\right].
\end{equation*}
This implies that, $$G^{\circ}=\left[\begin{array}{l}
x_{s}\\
x_{s-1}\\
\vdots\\
x_{2}\\
x_{1}
\end{array}\right].$$

\noindent So, 
\begin{eqnarray*}
GG^{\circ}
&=&2\sum_{i=1}^{s}x_{i}x_{s-i+1}\\
&=&0.
\end{eqnarray*}
Proving that there is no binary $[1\times s,1]$ NRT-LCD code.
\end{proof}

\begin{corollary}
If $s$ is even, then there is no binary $[1\times s,s-1]$ NRT-LCD code.
\end{corollary}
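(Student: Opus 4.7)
The natural strategy is duality: a binary $[1\times s,s-1]$ code has an NRT-dual of dimension $1$, and the preceding theorem rules out $[1\times s,1]$ NRT-LCD codes for even $s$. So I would first verify that for the NRT bilinear form $\langle\cdot,\cdot\rangle_N$ (which is just the standard dot product composed with the column-flip, hence a nondegenerate symmetric bilinear form on $M_{1,s}(\mathbb{F}_2)$) the usual duality facts hold: $\dim\mathcal{C}+\dim\mathcal{C}^{\perp_N}=s$ and $(\mathcal{C}^{\perp_N})^{\perp_N}=\mathcal{C}$. Both are immediate because flipping is an $\mathbb{F}_2$-linear isomorphism, so $\mathcal{C}^{\perp_N}=\mathrm{flip}(\mathcal{C})^{\perp_E}$.

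Next I would observe that the NRT-LCD property is self-dual: the condition $\mathcal{C}\cap\mathcal{C}^{\perp_N}=\{0\}$ is literally symmetric in $\mathcal{C}$ and $\mathcal{C}^{\perp_N}$ once one uses $(\mathcal{C}^{\perp_N})^{\perp_N}=\mathcal{C}$. Concretely, if $\mathcal{C}$ is NRT-LCD then
\[
\mathcal{C}^{\perp_N}\cap\bigl(\mathcal{C}^{\perp_N}\bigr)^{\perp_N}=\mathcal{C}^{\perp_N}\cap\mathcal{C}=\{0\},
\]
so $\mathcal{C}^{\perp_N}$ is also NRT-LCD.

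Combining these two facts, suppose for contradiction that a binary $[1\times s,s-1]$ NRT-LCD code $\mathcal{C}\subseteq M_{1,s}(\mathbb{F}_2)$ exists with $s$ even. Then $\mathcal{C}^{\perp_N}$ is a binary $[1\times s,1]$ NRT-LCD code, which contradicts the theorem just proved. Hence no such $\mathcal{C}$ exists. The only step that requires any care is the verification of the duality bookkeeping for $\langle\cdot,\cdot\rangle_N$, and even that is straightforward since the form is visibly nondegenerate; everything else is a one-line application of the previous theorem.
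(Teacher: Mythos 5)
Your proof is correct and is exactly the argument the paper leaves implicit: the corollary appears without proof immediately after the theorem on $[1\times s,1]$ codes, the intended route being precisely that the NRT form is nondegenerate (its Gram matrix is the flip permutation matrix), so the dual of a $[1\times s,s-1]$ NRT-LCD code is a $[1\times s,1]$ NRT-LCD code, contradicting that theorem for even $s$. Your duality bookkeeping --- $\dim\mathcal{C}+\dim\mathcal{C}^{\perp_N}=s$, $(\mathcal{C}^{\perp_N})^{\perp_N}=\mathcal{C}$ via $\mathcal{C}^{\perp_N}=\mathrm{flip}(\mathcal{C})^{\perp_E}$, and the resulting symmetry of the LCD property --- is the only content actually needed, and it is sound.
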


\begin{theorem}
There is a binary maximum distance separable $[1\times s,2]$ NRT-LCD code.
\end{theorem}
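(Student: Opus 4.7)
The approach would be constructive: I would exhibit a specific generator matrix $G\in M_{2,s}(\mathbb{F}_{2})$, verify that the resulting code $\mathcal{C}$ saturates the Singleton bound (\ref{Singletonbound}), and then apply Lemma \ref{GGo} by checking that $GG^{\circ}$ is nonsingular. For $s=2$ the ambient space $M_{1,2}(\mathbb{F}_{2})$ is itself a $[1\times 2,2,1]$ NRT-LCD MDS code (one checks directly that $\mathcal{C}^{\perp_{N}}=\{0\}$), so throughout I would assume $s\geq 3$.

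The matrix I would propose is
\[
G=\left[\begin{array}{ccccccc} 1 & 0 & 0 & \cdots & 0 & 1 & 0 \\ 0 & 0 & 0 & \cdots & 0 & 0 & 1 \end{array}\right],
\]
so row $1$ equals $e_{1}+e_{s-1}$ and row $2$ equals $e_{s}$. The three nonzero codewords of $\mathcal{C}$ are the two rows and their sum, with NRT weights $s-1$, $s$, and $s$, respectively. Hence $d_{N}(\mathcal{C})=s-1$, which meets the Singleton bound, so $\mathcal{C}$ is MDS.

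For the NRT-LCD property, the plan is to compute $GG^{\circ}$ entry-wise via $(GG^{\circ})_{ij}=\sum_{t=1}^{s}G_{i,t}G_{j,s-t+1}$. On the diagonal, pairs $(t,s-t+1)$ with $t\neq s-t+1$ are counted twice and so vanish in characteristic $2$, leaving only a possible middle contribution $G_{i,(s+1)/2}$ when $s$ is odd. Combined with the sparse supports of $G_{1}$ and $G_{2}$, this gives $(GG^{\circ})_{2,2}=0$ for every $s\geq 3$, while $(GG^{\circ})_{1,1}=0$ except in the borderline case $s=3$, where $(s+1)/2$ coincides with $s-1=2$ and so $(GG^{\circ})_{1,1}=1$. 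The off-diagonal entries are immediate: $(GG^{\circ})_{1,2}=(GG^{\circ})_{2,1}=G_{1,1}G_{2,s}=1$. In all cases $(GG^{\circ})_{1,1}(GG^{\circ})_{2,2}=0$, and therefore
\[
\det(GG^{\circ})\;=\;(GG^{\circ})_{1,2}(GG^{\circ})_{2,1}\;=\;1\quad\text{in }\mathbb{F}_{2},
\]
which by Lemma \ref{GGo} shows that $\mathcal{C}$ is NRT-LCD.

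The only mildly delicate point would be handling $s=3$, where the middle column collides with the second support position of row $1$; this changes the value of $(GG^{\circ})_{1,1}$ from $0$ to $1$ but leaves the determinant unchanged, so a single construction serves every $s\geq 3$ uniformly.
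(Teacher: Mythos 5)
Your proposal is correct and takes essentially the same route as the paper: exhibit an explicit generator matrix, check that $d_{N}=s-1$ meets the Singleton bound, and verify the NRT-LCD property via nonsingularity of $GG^{\circ}$ using Lemma \ref{GGo}. Your matrix (row $1$ equal to $e_{1}+e_{s-1}$, row $2$ equal to $e_{s}$) is exactly the paper's choice in its even-$s$ case; the paper switches to a second matrix when $s$ is odd, whereas your computation shows—correctly, including the $s=3$ collision where $(GG^{\circ})_{1,1}$ becomes $1$ without affecting the determinant—that the single construction works uniformly for all $s$, a modest streamlining of the same argument.
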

\begin{proof}
If $\mathcal{C}$ is a binary maximum distance separable code of dimension two, then $d_{N}=s-1$. So, the generator matrix of $\mathcal{C}$ in standard form is
\begin{equation*}
G=\left[\begin{array}{ccccccc}
a_{1,1}&a_{1,2}&a_{1,3}&\cdots&a_{1,s-2}&1&0\\
a_{2,1}&a_{2,2}&a_{2,3}&\cdots&a_{2,s-2}&0&1
\end{array}\right],
\end{equation*}
where $a_{1,j},a_{2,j}\in\mathbb{F}_{2}$ for $1\leq j\leq s-2$. Then,
\begin{equation*}
G^{\circ}=\left[\begin{array}{cc}
0&1\\
1&0\\
a_{1,s-2}&a_{2,s-2}\\
\vdots&\vdots\\
a_{1,3}&a_{2,3}\\
a_{1,2}&a_{2,2}\\
a_{1,1}&a_{2,1}
\end{array}\right].
\end{equation*}
So, $GG^{\circ}=\left[\begin{array}{cc}
A_{1,1}&A_{1,2}\\
A_{2,1}&A_{2,2} 
\end{array}\right]$, where
\begin{itemize}
    \item[i)] If $s$ is even: $A_{1,1}=A_{2,2}=0$, and
    \begin{eqnarray*}
      A_{1,2}&=&a_{1,1}+a_{1,3}a_{2,s-2}+\ldots+a_{1,s-2}a_{2,3}+a_{2,2}\\
      A_{2,1}&=&a_{2,2}+a_{2,3}a_{1,s-2}+\ldots+a_{2,s-2}a_{1,3}+a_{1,1}.
    \end{eqnarray*}
Choosing $a_{1,1}=1$, and $a_{i,j}=0$ for $(i,j)\neq (1,1)$, we have that
\begin{equation*}
G=\left[\begin{array}{ccccccc}
1&0&0&\cdots&0&1&0\\
0&0&0&\cdots&0&0&1
\end{array}\right]
\end{equation*}
is the generator matrix of a binary maximum distance separable $[1\times s,2]$ NRT-LCD code, since $GG^{\circ}\left[\begin{array}{cc}
0&1\\
1&0 
\end{array}\right]$.

\item[ii)] If $n$ is odd: $A_{1,1}=a_{1,\frac{s+1}{2}}$, $A_{2,2}=a_{2,\frac{s+1}{2}}$, and
\begin{eqnarray*}
      A_{1,2}&=&a_{1,1}+a_{1,3}a_{2,s-2}+\ldots+a_{1,\frac{s+1}{2}}a_{2,\frac{s+1}{2}}+\ldots+a_{1,s-2}a_{2,3}+a_{2,2}\\
      A_{2,1}&=&a_{2,2}+a_{2,3}a_{1,s-2}+\ldots+a_{1,\frac{s+1}{2}}a_{2,\frac{s+1}{2}}+\ldots+a_{2,s-2}a_{1,3}+a_{2,2}.
\end{eqnarray*}
Choosing $a_{1,1}=a_{1,\frac{s+1}{2}}=a_{2,\frac{s+1}{2}}=1$ and $a_{1,j}=a_{2,j}=0$ for all $j\neq\frac{s+1}{2}$, we have that
\begin{equation*}
G=\left[\begin{array}{ccccccccccc}
1&0&0&\cdots&0&1&0&\cdots&0&1&0\\
0&0&0&\cdots&0&1&0&\cdots&0&0&1
\end{array}\right]
\end{equation*}
is the generator matrix of a binary maximum distance separable $[1\times s,2]$ NRT-LCD code, since $GG^{\circ}\left[\begin{array}{cc}
1&0\\
0&1 
\end{array}\right]$.
\end{itemize}
\end{proof}
\begin{corollary}
$LCD[1\times s,2]=s-1$ and $LCD[1\times s,s-2]=3$.
\end{corollary}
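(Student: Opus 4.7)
The plan is to bracket both quantities between the Singleton bound (\ref{Singletonbound}) and an explicit NRT-LCD construction. For the first equality, Singleton gives $LCD[1\times s,2]\leq s-1$, and the preceding theorem already exhibits a binary maximum distance separable $[1\times s,2]$ NRT-LCD code, forcing equality. Only the second equality requires a further argument.

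For $LCD[1\times s,s-2]$, Singleton gives the upper bound $3$, so the task reduces to producing a binary $[1\times s,s-2,3]$ NRT-LCD code. The cleanest route is to take $\mathcal{C}^{\perp_{N}}$, where $\mathcal{C}$ is the $[1\times s,2,s-1]$ NRT-LCD code from the preceding theorem, and to show it has the desired parameters. Two facts are needed. First, NRT-LCD is preserved under NRT-duality: non-degeneracy of $\langle\cdot,\cdot\rangle_{N}$, apparent from the identity $\langle u,v\rangle_{N}=u\cdot flip(v)$, gives $(\mathcal{C}^{\perp_{N}})^{\perp_{N}}=\mathcal{C}$, whence $\mathcal{C}^{\perp_{N}}\cap(\mathcal{C}^{\perp_{N}})^{\perp_{N}}=\mathcal{C}\cap\mathcal{C}^{\perp_{N}}=\{0\}$. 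Second, the NRT dual of an NRT-MDS code in $M_{1,s}(\mathbb{F}_{q})$ is NRT-MDS of complementary dimension.

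The second fact is the real content. I would first reformulate NRT-MDS in $M_{1,s}$: a $[1\times s,k]$ code is NRT-MDS iff its projection onto the last $k$ coordinates is a bijection onto $\mathbb{F}_{q}^{k}$, since a codeword has NRT weight at most $s-k$ exactly when its last $k$ coordinates vanish. Then, using the identification $\mathcal{C}^{\perp_{N}}=flip(\mathcal{C}^{\perp_{E}})$ and the fact that $\rho(flip(\tilde{u}))=s+1-\min\{j:\tilde{u}_{j}\neq 0\}$, any nonzero $u\in\mathcal{C}^{\perp_{N}}$ with $\rho(u)\leq k$ would yield a nonzero $\tilde{u}\in\mathcal{C}^{\perp_{E}}$ supported on coordinates $s-k+1,\ldots,s$, so that $\sum_{j=s-k+1}^{s}\tilde{u}_{j}c_{j}=0$ for every $c\in\mathcal{C}$; the projection bijection then forces $\tilde{u}=0$. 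Hence $d_{N}(\mathcal{C}^{\perp_{N}})\geq k+1$, with equality by Singleton. Taking $k=2$ produces the sought $[1\times s,s-2,3]$ NRT-LCD code.

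The main obstacle is pinning down the projection reformulation of NRT-MDS in $M_{1,s}$; once that is in hand, the remaining computation is a short consequence of the flip identification between the NRT and Euclidean duals. An alternative, slightly less conceptual, approach would be to write down the explicit generator matrix of $\mathcal{C}$ given by the preceding theorem, compute its NRT dual directly (splitting into the even and odd $s$ cases as in that proof), and verify the parameters by hand; this avoids invoking the duality-preservation of NRT-MDS but trades elegance for bookkeeping.
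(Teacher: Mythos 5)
Your proposal is correct, and it reconstructs what the paper leaves implicit: the corollary is stated without proof immediately after the theorem producing a binary maximum distance separable $[1\times s,2]$ NRT-LCD code, so the intended argument is exactly your bracketing — Singleton (\ref{Singletonbound}) for the upper bounds, the preceding theorem for $k=2$, and NRT-duality for $k=s-2$. Both duality facts you invoke check out, with one small precision: to get $(\mathcal{C}^{\perp_{N}})^{\perp_{N}}=\mathcal{C}$ you need not only non-degeneracy of $\langle u,v\rangle_{N}=u\cdot flip(v)$ but also its symmetry (the substitution $j\mapsto s-j+1$ shows $\langle u,v\rangle_{N}=\langle v,u\rangle_{N}$), since symmetry is what gives the containment $\mathcal{C}\subseteq(\mathcal{C}^{\perp_{N}})^{\perp_{N}}$ before the dimension count forces equality; your projection characterization of NRT-MDS codes in $M_{1,s}$ and the flip identification $\mathcal{C}^{\perp_{N}}=flip(\mathcal{C}^{\perp_{E}})$ then correctly yield $d_{N}(\mathcal{C}^{\perp_{N}})=k+1$, in line with the MDS duality of Dougherty--Skriganov \cite{MDS_Dougherty}. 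As a remark, the paper's later Theorems \ref{MDSexistenceforkeven} and \ref{MDSexistenceforskodd} also cover $k=s-2$ directly (for $s$ even, $k=s-2$ is even; for $s$ odd, $s$ and $k=s-2$ are both odd), so the second equality can alternatively be read off from those explicit constructions, which is essentially the bookkeeping route you mention at the end.
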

\begin{theorem}
There is no binary maximum distance separable $[1\times s,3]$ NRT-LCD code for even $s$.
\end{theorem}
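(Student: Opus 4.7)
The plan is to use Lemma~\ref{GGo} as the only bridge between the NRT-LCD property and linear algebra, and to show that for any binary $3\times s$ generator matrix $G$ with $s$ even, the matrix $GG^{\circ}$ is forced to be singular. This in fact gives more than the stated theorem: it rules out every binary $[1\times s,3]$ NRT-LCD code for even $s$, not just the MDS ones, so the MDS hypothesis will be almost incidental.

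The argument rests on two structural observations about $GG^{\circ}$. First, its $(i,j)$ entry is the NRT pairing $\langle r_i,r_j\rangle_N=\sum_{l=1}^{s}r_{i,l}r_{j,s-l+1}$ of the $i$-th and $j$-th rows of $G$, and the change of summation index $l\mapsto s-l+1$ shows this pairing is symmetric; hence $GG^{\circ}$ is a symmetric matrix. Second, when $s$ is even each diagonal entry $\langle r_i,r_i\rangle_N$ vanishes in $\mathbb{F}_2$: the indices of the sum split into $s/2$ disjoint pairs $(l,s-l+1)$ with no fixed point (this is precisely where the parity of $s$ enters), so each product $r_{i,l}r_{i,s-l+1}$ occurs twice and cancels in characteristic $2$.

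Once these two facts are in hand I would finish by a one-line determinant computation for a generic $3\times 3$ symmetric matrix with zero diagonal,
$$\det\begin{bmatrix}0&\alpha&\beta\\ \alpha&0&\gamma\\ \beta&\gamma&0\end{bmatrix}=2\alpha\beta\gamma=0,$$
so $GG^{\circ}$ is singular for every admissible $G$, contradicting Lemma~\ref{GGo}. (Equivalently, a symmetric matrix with zero diagonal over $\mathbb{F}_2$ is alternating, and alternating forms have even rank, so their determinants vanish in odd dimension.) I do not foresee a real obstacle here; the one temptation I would resist is invoking the standard form of Definition~\ref{standard form} together with the MDS weight pattern $(s-2,s-1,s)$, since the argument above needs neither the pivot structure nor the MDS spectrum and would only be lengthened by them.
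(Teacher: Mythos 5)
Your proposal is correct, and it takes a genuinely different (and stronger) route than the paper. The paper's proof begins from the MDS hypothesis: it invokes the standard form of Definition~\ref{standard form} to write the generator matrix of a putative $[1\times s,3,s-2]$ code explicitly with pivots in the last three columns, computes the entries $A_{i,j}$ of $GG^{\circ}$, observes $A_{1,1}=A_{2,2}=A_{3,3}=0$ with $A_{i,j}=A_{j,i}$, and then checks $\det(GG^{\circ})=0$. Your argument isolates exactly the two facts doing the work there --- symmetry of the NRT pairing $\langle u,v\rangle_{N}=\sum_{l=1}^{s}u_{l}v_{s-l+1}$, and the vanishing of $\langle u,u\rangle_{N}$ for even $s$ because the involution $l\mapsto s-l+1$ is fixed-point-free --- and applies them to an arbitrary generator matrix, so the MDS hypothesis and the standard form drop out entirely. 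The payoff is a strictly stronger conclusion: for even $s$, $GG^{\circ}$ is alternating (symmetric with zero diagonal over $\mathbb{F}_{2}$), and since alternating matrices have even rank, $GG^{\circ}$ is singular for \emph{every} odd $k$, not just $k=3$; by Lemma~\ref{GGo} this rules out all binary $[1\times s,k]$ NRT-LCD codes with $s$ even and $k$ odd. In particular your argument simultaneously recovers the paper's earlier theorem that no binary $[1\times s,1]$ NRT-LCD code exists for even $s$ and its corollary for $k=s-1$, which the paper proves separately. What the paper's explicit standard-form computation buys instead is reuse: the same entry-level formulas for $A_{i,j}$, with the diagonal now $a_{i,\frac{s+1}{2}}$, feed directly into the companion existence proof for odd $s$, where a concrete choice of the $a_{i,j}$ is needed; your abstract argument cannot be recycled for that constructive step, but for the non-existence statement at hand it is cleaner and more general.
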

\begin{proof}
If $\mathcal{C}$ is a binary maximum distance separable code of dimension 3, then $d_{N}=s-2$. So, the generator matrix of $\mathcal{C}$ in standard form is
\begin{equation*}
G=\left[\begin{array}{ccccccccc}
a_{1,1}&a_{1,2}&a_{1,3}&a_{1,4}&\cdots&a_{1,s-3}&1&0&0\\
a_{2,1}&a_{2,2}&a_{2,3}&a_{2,4}&\cdots&a_{2,s-3}&0&1&0\\
a_{3,1}&a_{3,2}&a_{3,3}&a_{3,4}&\cdots&a_{3,s-3}&0&0&1
\end{array}\right].
\end{equation*}
And,
\begin{equation*}
G^{\circ}=\left[\begin{array}{ccc}
0&0&1\\
0&1&0\\
1&0&0\\
a_{1,s-3}&a_{2,s-3}&a_{3,s-3}\\
a_{1,s-4}&a_{2,s-4}&a_{3,s-4}\\
\vdots&\vdots&\vdots\\
a_{1,4}&a_{2,4}&a_{3,4}\\
a_{1,3}&a_{2,3}&a_{3,3}\\
a_{1,2}&a_{2,2}&a_{3,2}\\
a_{1,1}&a_{2,1}&a_{3,1}
\end{array}\right].
\end{equation*}
So, $GG^{\circ}=\left[\begin{array}{ccc}
A_{1,1}&A_{1,2}&A_{1,3}\\
A_{2,1}&A_{2,2}&A_{2,3}\\
A_{3,1}&A_{3,2}&A_{3,3}
\end{array}\right]$, where $A_{1,1}=A_{2,2}=A_{3,3}=0$, and
\begin{eqnarray*}
 A_{1,2}=A_{2,1}&=&a_{1,2}+a_{2,3}+\sum_{j=4}^{s-3}a_{1,j}a_{2,s-j+1}\\
 A_{1,3}=A_{3,1}&=&a_{1,1}+a_{3,3}+\sum_{j=4}^{s-3}a_{1,j}a_{3,s-j+1}\\
 A_{2,3}=A_{3,2}&=&a_{2,1}+a_{3,2}+\sum_{j=4}^{s-3}a_{2,j}a_{3,s-j+1}.
\end{eqnarray*}
Finally, we can check that $\det(GG^{\circ})=0$. That is, $\mathcal{C}$ is not an $[1\times s,3]$ NRT-LCD code.
\end{proof}

\begin{theorem}
There is a binary maximum distance separable $[1\times s,3]$ NRT-LCD code for odd $s$.
\end{theorem}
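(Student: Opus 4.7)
My plan is to give an explicit generator matrix in standard form so that the MDS property can be read off and $GG^{\circ}$ is visibly a permutation matrix, hence nonsingular. The preceding theorem already explains why something must change for odd $s$: in the even case every diagonal entry $A_{i,i}$ of $GG^{\circ}$ vanishes by the pair-cancellation $g_{i,\ell}g_{i,s-\ell+1}+g_{i,s-\ell+1}g_{i,\ell}=0$ in $\mathbb{F}_{2}$, which then forces $\det(GG^{\circ})=0$. For odd $s$ the diagonal inherits a ``middle-column'' contribution $A_{i,i}=g_{i,(s+1)/2}^{2}=g_{i,(s+1)/2}$, and the idea is to exploit this by turning on exactly one of those middle entries.

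Concretely I would take the generator matrix in standard form with all free parameters $a_{i,j}$ equal to zero except $a_{1,(s+1)/2}=1$ and $a_{2,1}=1$. (When $s=5$ the column $(s+1)/2$ coincides with the pivot column $s-2$, so the first row is already the canonical vector $e_{s-2}$ and no extra assignment is needed.) The MDS verification is then immediate: the three rows have NRT-weights $s-2,\,s-1,\,s$, and since the only off-pivot nonzero positions $1$ and $(s+1)/2$ both lie strictly below the three pivot coordinates, every nonzero linear combination of rows still has its maximum nonzero coordinate at one of $s-2,\,s-1,\,s$, so its NRT-weight is at least $s-2$. This gives $d_{N}=s-2$ and the code is MDS.

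For the LCD condition I would use the very formulas for $A_{i,j}$ derived in the preceding theorem, augmented by the odd-$s$ middle-column term on the diagonal. A direct check gives $A_{1,1}=1$, $A_{2,2}=A_{3,3}=0$, $A_{1,2}=A_{1,3}=0$ (the supports of row $1$ and of the ``flipped'' rows $2$ and $3$ never overlap at the same index $\ell$), and $A_{2,3}=1$ (the single overlap is at $\ell=1$, $s-\ell+1=s$), so
\begin{equation*}
GG^{\circ}=\left[\begin{array}{ccc}1&0&0\\0&0&1\\0&1&0\end{array}\right],
\end{equation*}
which is a permutation matrix and therefore nonsingular; by Lemma \ref{GGo} the code is NRT-LCD. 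The one conceptual point to notice — and the main place where a design choice must be made — is that once $A_{2,2}=A_{3,3}=0$, the symmetric $3\times 3$ determinant collapses in $\mathbb{F}_{2}$ to the single term $A_{2,3}^{2}=A_{2,3}$, so the LCD requirement reduces to forcing $A_{2,3}=1$. This is why the assignment must be asymmetric between rows $2$ and $3$, and why the minimal choice $a_{2,1}=1$, $a_{3,2}=0$ already does the job.
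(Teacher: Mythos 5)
Your proposal is correct and is essentially the paper's own proof: you use the same standard-form generator matrix with the identical parameter assignment $a_{1,\frac{s+1}{2}}=a_{2,1}=1$ and all other free entries zero, and the paper's verification $\det(GG^{\circ})=1$ via the symmetric-determinant expansion $\det(GG^{\circ})=A_{1,1}A_{2,2}A_{3,3}+A_{1,1}A_{2,3}+A_{2,2}A_{1,3}+A_{3,3}A_{1,2}$ amounts to the same check as your observation that $GG^{\circ}$ is a permutation matrix, with Lemma \ref{GGo} closing the argument in both cases. Your explicit MDS verification and your handling of the degenerate case $s=5$ (where $\frac{s+1}{2}$ coincides with the pivot column $s-2$) are minor refinements that the paper leaves implicit, not a different route.
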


\begin{proof}
Let $G$ be the generator matrix of $\mathcal{C}$ in the standard form. That is,

\begin{equation*}
G=\left[\begin{array}{ccccccccc}
a_{1,1}&a_{1,2}&a_{1,3}&a_{1,4}&\cdots&a_{1,s-3}&1&0&0\\
a_{2,1}&a_{2,2}&a_{2,3}&a_{2,4}&\cdots&a_{2,s-3}&0&1&0\\
a_{3,1}&a_{3,2}&a_{3,3}&a_{3,4}&\cdots&a_{3,s-3}&0&0&1
\end{array}\right].
\end{equation*}
Then,
\begin{equation*}
G^{\circ}=\left[\begin{array}{ccc}
0&0&1\\
0&1&0\\
1&0&0\\
a_{1,s-3}&a_{2,s-3}&a_{3,s-3}\\
a_{1,s-4}&a_{2,s-4}&a_{3,s-4}\\
\vdots&\vdots&\vdots\\
a_{1,4}&a_{2,4}&a_{3,4}\\
a_{1,3}&a_{2,3}&a_{3,3}\\
a_{1,2}&a_{2,2}&a_{3,2}\\
a_{1,1}&a_{2,1}&a_{3,1}
\end{array}\right].
\end{equation*}
So, $GG^{\circ}=\left[\begin{array}{ccc}
A_{1,1}&A_{1,2}&A_{1,3}\\
A_{2,1}&A_{2,2}&A_{2,3}\\
A_{3,1}&A_{3,2}&A_{3,3}
\end{array}\right]$, where $A_{i,i}=a_{i,\frac{s+1}{2}}$ for $i=1,2,3$, and
\begin{eqnarray*}
 A_{1,2}=A_{2,1}&=&a_{1,2}+a_{2,3}+\sum_{j=4}^{s-3}a_{1,j}a_{2,s-j+1}\\
 A_{1,3}=A_{3,1}&=&a_{1,1}+a_{3,3}+\sum_{j=4}^{s-3}a_{1,j}a_{3,s-j+1}\\
 A_{2,3}=A_{3,2}&=&a_{2,1}+a_{3,2}+\sum_{j=4}^{s-3}a_{2,j}a_{3,s-j+1}.
\end{eqnarray*}
We can check that $\det(GG^{\circ})=a_{1,\frac{s+1}{2}}a_{2,\frac{s+1}{2}}a_{3,\frac{s+1}{2}}+a_{1,\frac{s+1}{2}}A_{2,3}+a_{2,\frac{s+1}{2}}A_{1,3}+a_{3,\frac{s+1}{2}}A_{1,2}$. So, if we choose, for example, $a_{2,1}=a_{1,\frac{s+1}{2}}=1$, and $a_{i,j}=0$ for all $(i,j)\notin\lbrace (2,1)(1,\frac{s+1}{2})\rbrace$ then

\begin{equation*}
G=\left[\begin{array}{ccccccccccccc}
0&0&0&0&\cdots&0&1&0&\cdots&0&1&0&0\\
1&0&0&0&\cdots&0&0&0&\cdots&0&0&1&0\\
0&0&0&0&\cdots&0&0&0&\cdots&0&0&0&1
\end{array}\right]
\end{equation*}
is such that $\det(GG^{\circ})=1$. That is, $G$ is the generator matrix of an $[1\times s,3]$ NRT-LCD code.
\end{proof}
\begin{corollary}
If $s$ is odd, then $LCD[1\times s,3]=s-2$ and $LCD[1\times s,s-3]=4$.
\end{corollary}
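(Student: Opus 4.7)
The plan is to derive both equalities from the MDS NRT-LCD code of dimension $3$ just constructed, using the Singleton bound for the upper bounds and NRT-duality to obtain a companion code of dimension $s-3$.

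First I would apply the Singleton bound (\ref{Singletonbound}) with $n=1$ to both parameter choices: $k=3$ gives $LCD[1\times s,3]\le s-2$, and $k=s-3$ gives $LCD[1\times s,s-3]\le 4$. The first equality is then immediate from the preceding theorem, since the binary MDS $[1\times s,3]$ NRT-LCD code it produces has minimum NRT distance $s-3+1=s-2$, so $LCD[1\times s,3]\ge s-2$ as well.

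For the second equality I would argue by dualising the code $\mathcal{C}$ of the previous theorem. Two observations are needed. First, the NRT inner product $\langle u,v\rangle_N=\sum_{j=1}^{s}u_j v_{s-j+1}$ is symmetric (after the reindexing $j\mapsto s-j+1$), hence $(\mathcal{C}^{\perp_N})^{\perp_N}=\mathcal{C}$ and the NRT-LCD condition $\mathcal{C}\cap \mathcal{C}^{\perp_N}=\{0\}$ is itself symmetric in $\mathcal{C}$ and $\mathcal{C}^{\perp_N}$, so the NRT dual of an NRT-LCD code is again NRT-LCD. Second, the NRT dual of an MDS code in $M_{1,s}(\mathbb{F}_q)$ is again MDS, a classical fact for one-layer NRT codes (\cite{MDS_Dougherty}). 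Combining these, $\mathcal{C}^{\perp_N}$ is a binary $[1\times s,s-3]$ MDS NRT-LCD code of minimum NRT distance $s-(s-3)+1=4$, giving $LCD[1\times s,s-3]\ge 4$ and hence equality. This is precisely the same duality pattern that the paper already used implicitly to promote the non-existence statement for dimension $1$ to dimension $s-1$, and the existence for dimension $2$ to dimension $s-2$.

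The only real obstacle is the quotation of MDS-preservation under NRT-duality, but this is standard. Should a self-contained argument be preferred, one could instead imitate the construction of the previous theorem by writing a standard-form generator matrix of type $(4,5,\ldots,s)$ with three free columns at positions $1,2,3$ and choosing the free entries so that $GG^{\circ}$ is nonsingular, exactly mirroring the determinant computation for the dimension-$3$ case. The duality route is nevertheless the cleanest and shortest proof.
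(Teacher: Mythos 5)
Your proposal is correct and follows exactly the route the paper intends: the paper states this corollary without proof, and the evident argument is the one you give — Singleton upper bounds for both dimensions, the preceding theorem's MDS $[1\times s,3]$ NRT-LCD code for the first lower bound, and NRT-duality (the flip inner product is symmetric and nondegenerate, so the dual of an NRT-LCD code is NRT-LCD, and duals of MDS NRT codes are MDS by \cite{MDS_Dougherty}) for the $[1\times s,s-3,4]$ code. This is the same duality pattern implicitly used for the paper's companion corollaries in dimensions $s-1$ and $s-2$, so no further comment is needed.
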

In general, we have the following result for $k$ even.
\begin{theorem}\label{MDSexistenceforkeven}
There is a binary maximum distance separable $[1\times s,k]$ NRT-LCD code for $k$ is even. In other words, if $k$ is even then $LCD[1\times s,k]=s-k+1$.
\end{theorem}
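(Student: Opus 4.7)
The plan is to construct, for every even $k$ and every $s\ge k$, an explicit generator matrix $G=[A\mid I_k]$ in the standard form of Definition~\ref{standard form}, with pivots at columns $s-k+1,\dots,s$, and to verify that the resulting binary code is both MDS and NRT-LCD. The MDS property is automatic from the pivot structure: each row $g_i$ has its last nonzero entry at column $s-k+i$, so $\rho(g_i)=s-k+i$, and since $(xG)_{s-k+i}=x_i$ the NRT weight of any nonzero combination $xG$ is at least $s-k+1$. Hence $d_N(\mathcal{C})=s-k+1$ for every choice of $A$, matching the Singleton bound~\eqref{Singletonbound}. By Lemma~\ref{GGo}, the remaining task is to choose $A\in M_{k,s-k}(\mathbb{F}_2)$ so that $GG^{\circ}$ is nonsingular.

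The first step is to compute $(GG^{\circ})_{ij}=\langle g_i,g_j\rangle_N$ as a function of the entries of $A$. The resulting matrix is always symmetric, and a pairing argument shows that in $\mathbb{F}_2$ its diagonal collapses: every contribution $g_{i,l}g_{i,s-l+1}$ pairs with $g_{i,s-l+1}g_{i,l}$ and cancels, leaving only the fixed point $l=(s+1)/2$, which exists only for $s$ odd. Consequently, when $s$ is even the matrix $GG^{\circ}$ is alternating, and any alternating matrix over $\mathbb{F}_2$ has even rank. This is the intrinsic reason the hypothesis "$k$ even" is needed: an alternating $k\times k$ matrix of odd size is necessarily singular, which is also why the companion statement fails for $k$ odd and $s$ even.

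The second step is to exhibit $A$ case by case. In the bulk regime $s\ge 2k$ (absorbing the trivial case $s=k$ by taking $A$ empty so that $G=I_k$ and $GG^{\circ}=J_k$), the choice $A_{i,i}=1$ for $i=1,\dots,k/2$ and zero elsewhere places all nonzero entries outside the middle column range $[k+1,s-k]$, so the quadratic part of $\langle g_i,g_j\rangle_N$ vanishes and only the linear contributions $A_{i,k-j+1}$ and $A_{j,k-i+1}$ survive. Since $k$ is even, the equation $i+j=k+1$ splits the pair $(i,j)$ between the two halves of $[1,k]$, so the only nonzero entries of $GG^{\circ}$ occur on the anti-diagonal $j=k-i+1$, giving $GG^{\circ}=J_k$, which is nonsingular. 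The hard part is the narrow regime $k+1\le s\le 2k-1$: the middle range is empty and pivot-pivot interactions make $GG^{\circ}$ rigid, so the simple diagonal-in-$A$ construction breaks. For the boundary $s=k+1$ one checks that the single entry $A_{k/2,1}=1$ suffices, because after relabeling $GG^{\circ}$ decomposes into a permutation matrix on the indices outside $\{k/2,k\}$ together with the $2\times 2$ block $\left[\begin{array}{cc}1 & 1\\ 1 & 0\end{array}\right]$, both of which have nonzero determinant; the intermediate values $k+2\le s\le 2k-1$ require analogous ad hoc perturbations of the diagonal pattern, combining a partial "diagonal-in-$A$" choice with a small number of extra entries to repair the rank loss. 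Completing this case analysis is the main technical obstacle, and once it is in place, the matching Singleton bound yields $LCD[1\times s,k]=s-k+1$.
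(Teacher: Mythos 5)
Your skeleton is the same as the paper's: take a standard-form generator with pivots at columns $s-k+1,\dots,s$ (so MDS is automatic and, with the Singleton bound~\eqref{Singletonbound}, gives $LCD[1\times s,k]=s-k+1$), and reduce NRT-LCD to nonsingularity of $GG^{\circ}$ via Lemma~\ref{GGo}. In fact your bulk choice $A_{i,i}=1$ for $i\le k/2$ is exactly the paper's construction, whose rows are $e_{i}+e_{s-k+i}$ for $1\le i\le k/2$ followed by $e_{s-k/2+1},\dots,e_{s}$, and your verification that $GG^{\circ}=J_k$ in that regime agrees with the paper's computation. The genuine gap is that your proof stops there: for $k+2\le s\le 2k-1$ you offer only a promise of ``analogous ad hoc perturbations'' without exhibiting them or proving they exist, and you yourself flag this range as the main technical obstacle. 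A theorem claimed for all $s\ge k$ but verified only for $s\ge 2k$, $s=k$, and $s=k+1$ is not proved.

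Moreover, your diagnosis of the narrow regime is wrong: the diagonal pattern does not break there, so no perturbations are needed. What breaks is only the bookkeeping --- when $s-k<k/2$ the shape $[A\mid I_k]$ cannot host all the diagonal entries --- so write the rows in absolute coordinates, $g_i=e_i+e_{s-k+i}$ for $i\le k/2$ and $g_i=e_{s-k+i}$ for $i>k/2$, which makes sense for every $s\ge k+1$ (the distinct top positions $s-k+i$ still force $d_N=s-k+1$). Computing $(GG^{\circ})_{i,j}=\langle g_i,g_j\rangle_N$, a pair position of $g_i$ meets the support of $g_j$ only when $i+j=k+1$ (always, contributing $1$) or when $i+j=2k+1-s$, and for $s\ge k+1$ one has $2k+1-s\le k<k+1$. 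So $GG^{\circ}$ has ones exactly on the anti-diagonal $i+j=k+1$ and possibly extra entries on the single line $i+j=2k+1-s$, strictly above it; reversing the column order $j\mapsto k+1-j$ makes the matrix unitriangular, whence $\det(GG^{\circ})=1$ for every $s\ge k+1$ with no case analysis. (For instance, $k=4$, $s=5$ gives
\begin{equation*}
GG^{\circ}=\left[\begin{array}{cccc}
0&0&1&1\\
0&1&1&0\\
1&1&0&0\\
1&0&0&0
\end{array}\right],
\end{equation*}
which is not $J_4$ but is nonsingular.) This observation is also what is tacitly needed to make the paper's own proof airtight: its asserted identity $GG^{\circ}=J_k$ is literally exact only for $s\ge 2k$, and the anti-triangularity argument covers the remaining $s$ for the same uniform construction.
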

\begin{proof}
Let $\mathcal{C}$ be the binary $[1\times s,k,s-k+1]$ NRT metric code given by the following generator matrix in the standard form
\begin{equation}
G=\left[\begin{array}{c}
e_{1}+e_{s-k+1}\\
e_{2}+e_{s-k+2}\\
\vdots\\
e_{\frac{k}{2}}+e_{s-\frac{k}{2}}\\
e_{s-\frac{k}{2}+1}\\
e_{s-\frac{k}{2}+2}\\
\vdots\\
e_{s-1}\\
e_{s}
\end{array}\right],
\end{equation}
where $e_{i}$ is the $i$-th canonical vector of $\mathbb{F}_{2}^{s}$ for $1\leq i\leq s$. So,
\begin{equation}
flip(G)=\left[\begin{array}{c}
e_{s}+e_{k}\\
e_{s-1}+e_{k-1}\\
\vdots\\
e_{s-\frac{k}{2}+1}+e_{\frac{k}{2}-1}\\
e_{\frac{k}{2}}\\
e_{\frac{k}{2}-1}\\
\vdots\\
e_{2}\\
e_{1}
\end{array}\right],
\end{equation}
and
\begin{equation}
G^{\circ}=\left[\begin{array}{c|c|c|c|c|c|c|c|c}
e^{T}_{s}+e^{T}_{k}&e^{T}_{s-1}+e^{T}_{k-1}&\cdots&e^{T}_{s-\frac{k}{2}+1}+e^{T}_{\frac{k}{2}-1}&e^{T}_{\frac{k}{2}}&e_{\frac{k}{2}-1}&\cdots&e^{T}_{2}&e^{T}_{1}
\end{array}\right].
\end{equation}
Finally, 
\begin{equation}
GG^{\circ}=\left[\begin{array}{ccccc}
0&0&\cdots&0&1\\
0&0&\cdots&1&0\\
\vdots&\vdots&\vdots&\vdots&\vdots\\
0&1&\cdots&0&0\\
1&0&\cdots&0&0\\
\end{array}\right].
\end{equation}
$GG^{\circ}$ is clearly nonsingular. Proving that $\mathcal{C}$ is a binary maximum distance separable $[1\times s,k]$ NRT-LCD code.
\end{proof}
The case where $s$ and $k$ are odd is described by the following theorem.
\begin{theorem}\label{MDSexistenceforskodd}
There is a binary maximum distance separable $[1\times s,k]$ NRT-LCD code for $s$ and $k$ odd. In other words, if $s$ and $k$ are odd then $LCD[1\times s,k]=s-k+1$.
\end{theorem}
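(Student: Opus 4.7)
The plan is to generalize the construction of Theorem~\ref{MDSexistenceforkeven} by adjoining a distinguished middle row that exploits the fact that, since $s$ is odd, the central column $(s+1)/2$ is a fixed point of the flip $\ell\mapsto s-\ell+1$. For $s>k$ I would take the $k\times s$ generator matrix $G$ whose $i$-th row $R_i$ is given by
\begin{equation*}
R_i=e_i+e_{s-k+i}\ \text{for}\ 1\leq i\leq\tfrac{k-1}{2},\quad R_{(k+1)/2}=e_{(s+1)/2}+e_{s-(k-1)/2},\quad R_i=e_{s-k+i}\ \text{for}\ \tfrac{k+3}{2}\leq i\leq k,
\end{equation*}
and I would handle the degenerate case $s=k$ separately by setting $G=I_s$. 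The essential new ingredient compared with Theorem~\ref{MDSexistenceforkeven} is the extra $1$ at the self-reflecting column $(s+1)/2$ in the middle row, which will contribute exactly one (rather than an even number of) matches to the diagonal entry $(GG^\circ)_{(k+1)/2,(k+1)/2}$, precisely the obstruction that blocked the direct adaptation of the even-$k$ construction.

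First I would verify that the code is MDS. Each row has its pivot at a distinct column in $\{s-k+1,\ldots,s\}$, and the extra $1$'s outside the pivot block appear only at positions $i\leq(k-1)/2$ (for the first $(k-1)/2$ rows) and at $(s+1)/2$ (for the middle row). A short case analysis on the largest index $i^{\ast}$ with $c_{i^{\ast}}=1$ shows that any nonzero codeword $v=\sum c_iR_i$ satisfies $\rho(v)\geq s-k+1$. The only delicate point arises in the large-$k$ regime $k\geq(s+1)/2$, where $(s+1)/2$ coincides with the pivot column of row $p=k-(s-1)/2$; here I would check that even the cancellation $c_p=c_{(k+1)/2}=1$ cannot reduce $\rho(v)$ below $s-(k-1)/2$, because the pivot of $R_{(k+1)/2}$ at $s-(k-1)/2>(s+1)/2$ still survives.

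Finally, I would verify that $GG^\circ$ is nonsingular via Lemma~\ref{GGo}, by writing $(GG^\circ)_{ij}=|\mathrm{supp}(R_i)\cap\mathrm{refl}(\mathrm{supp}(R_j))|\bmod 2$ and enumerating intersections. In the small-$k$ regime $k\leq(s-1)/2$, the only nonzero entries are $(i,k+1-i)$ for $i\leq(k-1)/2$ (a single match coming from $s-k+i\leftrightarrow s-i+1$) together with $((k+1)/2,(k+1)/2)$ (the self-reflecting $1$ at $(s+1)/2$ matching itself), so $GG^\circ$ is the antidiagonal permutation matrix, plainly nonsingular. The hard part will be the large-$k$ regime $k\geq(s+1)/2$, where some pivots $s-k+i$ reflect into other pivot columns and create additional off-antidiagonal entries in $GG^\circ$; I would handle this by reordering rows in decreasing order of their rightmost nonzero column and then using Gaussian elimination to reduce $GG^\circ$ to an antitriangular form with $1$'s on the antidiagonal, so that it remains invertible over $\mathbb{F}_2$, yielding $LCD[1\times s,k]=s-k+1$.
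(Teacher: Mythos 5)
Your construction is essentially the paper's, transposed: the paper also builds a standard-form generator out of paired canonical vectors $e_j+e_{\mathrm{pivot}}$ plus unit vectors and exploits the self-reflecting column $(s+1)/2$, except that it attaches that column to the \emph{first} pivot (its first row is $e_{(s+1)/2}+e_{s-k+1}$, followed by $e_{j-1}+e_{s-k+j}$ and unit rows), so that its displayed $GG^{\circ}$ is the direct sum of the block $[1]$ and an antidiagonal permutation matrix. Your MDS verification is correct: for a nonzero codeword $v=\sum_{i\in S}c_iR_i$ with $i^{\ast}=\max S$, the pivot position $s-k+i^{\ast}$ strictly exceeds every other occupied position (the non-pivot entries sit at positions $j\leq (k-1)/2<s-k+i^{\ast}$ or at $(s+1)/2<s-(k-1)/2$, using $s>k$), so it cannot cancel and $\rho(v)=s-k+i^{\ast}\geq s-k+1$; your ``delicate'' cancellation at column $(s+1)/2$ is harmless for exactly this reason. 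Your small-$k$ computation of $GG^{\circ}$ is also complete and correct.

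The genuine gap is the large-$k$ regime $k\geq(s+1)/2$ of the LCD verification: saying you would ``use Gaussian elimination to reduce $GG^{\circ}$ to an antitriangular form with $1$'s on the antidiagonal, so that it remains invertible'' assumes precisely the nonsingularity you must prove --- elimination succeeds if and only if the matrix is invertible, so as written the step is circular, and you give no structural reason why the extra entries created when pivots reflect into other pivot columns cannot kill the determinant. The gap is fillable, and in fact no elimination is needed because the matrix is \emph{already} antitriangular. Put $m=s-k$ and $t=2k+1-s$ (even, with $2\leq t\leq k-1$ in this regime, since $k+2\leq s$). Enumerating matches between $\mathrm{supp}(R_i)$ and the reflection of $\mathrm{supp}(R_j)$: the reflections $s-k+i\leftrightarrow s-j+1$ produce the full antidiagonal $i+j=k+1$ with value exactly $1$ (the competing match conditions force $i+j=s+1$ or $s=k$, both impossible); the only other matches are $m+i=k+1-j$, i.e.\ $i+j=t$, together with the two sporadic entries $(t/2,(k+1)/2)$ and $((k+1)/2,t/2)$ coming from the self-reflecting column, whose index sum is $t/2+(k+1)/2=(3k+2-s)/2\leq k$. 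Hence every entry off the main antidiagonal has index sum at most $\max\{t,(3k+2-s)/2\}<k+1$, so $GG^{\circ}=J+N$ with $J$ the antidiagonal permutation matrix and $N$ supported strictly above the antidiagonal; then $JGG^{\circ}$ is lower unitriangular and $\det(GG^{\circ})=1$. Two remarks: this computation is genuinely needed, because for $k\geq(s+1)/2$ the naive ``only antidiagonal matches occur'' picture fails (the paper's own displayed $GG^{\circ}$ tacitly assumes no such overlaps, and for instance at $s=7$, $k=5$ its matrix has $(GG^{\circ})_{1,2}=1$, so the honest argument there also requires an antitriangularity analysis of this kind); and your degenerate case $s=k$ via $G=I_s$, where $GG^{\circ}=J$, is fine.
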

\begin{proof}
Let $\mathcal{C}$ be the binary $[1\times s,k,s-k+1]$ NRT metric code given by the following generator matrix in the standard form
\begin{equation}
G=\left[\begin{array}{c}
e_{\frac{s+1}{2}}+e_{s-k+1}\\
e_{1}+e_{s-k+2}\\
e_{2}+e_{s-k+3}\\
\vdots\\
e_{\frac{k+1}{2}-1}+e_{s-k+\frac{k+1}{2}}\\
e_{s-k+\frac{k+1}{2}}\\
e_{s-k+\frac{k+1}{2}+1}\\
\vdots\\
e_{s-1}\\
e_{s}
\end{array}\right],
\end{equation}
where $e_{i}$ is the $i$-th canonical vector of $\mathbb{F}_{2}^{s}$ for $1\leq i\leq s$. So,
\begin{equation}
flip(G)=\left[\begin{array}{c}
e_{\frac{s+1}{2}}+e_{k}\\
e_{s}+e_{k-1}\\
e_{s-1}+e_{k-2}\\
\vdots\\
e_{s-\frac{k-3}{2}}+e_{\frac{k+1}{2}}\\
e_{\frac{k+1}{2}-1}\\
e_{\frac{k+1}{2}-2}\\
\vdots\\
e_{2}\\
e_{1}
\end{array}\right],
\end{equation}
and
\begin{equation}
G^{\circ}=\left[\begin{array}{c|c|c|c|c|c|c|c|c|c}
e^{T}_{\frac{s+1}{2}}+e^{T}_{k}&e^{T}_{s}+e^{T}_{k-1}&e^{T}_{s-1}+e^{T}_{k-2}&\cdots&e^{T}_{s-\frac{k-3}{2}}+e^{T}_{\frac{k+1}{2}}&e^{T}_{\frac{k+1}{2}-1}&e_{\frac{k+1}{2}-2}&\cdots&e^{T}_{2}&e^{T}_{1}
\end{array}\right].
\end{equation}
Finally, 
\begin{equation}
GG^{\circ}=\left[\begin{array}{ccccc}
1&0&\cdots&0&0\\
0&0&\cdots&0&1\\
0&0&\cdots&1&0\\
\vdots&\vdots&\vdots&\vdots&\vdots\\
0&1&\cdots&0&0\\
\end{array}\right].
\end{equation}
$GG^{\circ}$ is clearly nonsingular. Proving that $\mathcal{C}$ is a binary maximum distance separable $[1\times s,k]$ NRT-LCD code.
\end{proof}

\section{Elementary bounds on binary NRT-LCD codes $M_{n,2}(\mathbb{F}_{2})$}\label{nx2}
In this section, we prove a condition on the non-existence of binary maximum distance separable $[n\times 2,k]$ NRT-LCD codes. For this purpose, we will consider the Plotkin bound \cite{NRT} for NRT metric codes.
\begin{theorem}(\cite{NRT}\label{Plotkin}, Plotkin bound for NRT metric codes). Let $\mathcal{C}$ be an $[n\times s,k,d_{N}]$ NRT metric code such that $d_{N}>ns\delta_{crit}$, where $\delta_{crit}=1-\frac{1}{sq^{s}}\frac{q^{s}-1}{q-1}$. Then,
\begin{equation}\label{Plotkin}
    q^{k}\leq\frac{d_{N}}{d_{N}-ns\delta_{crit}}.
\end{equation}
\end{theorem}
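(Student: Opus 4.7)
The plan is to adapt the classical Plotkin averaging argument to the NRT setting. Since $\mathcal{C}$ is linear, every nonzero codeword has NRT weight at least $d_{N}$, which gives the lower bound
\[
(q^{k}-1)\,d_{N}\;\leq\;\sum_{v\in\mathcal{C}}\rho(v).
\]
My target is a matching upper bound of the form $\sum_{v\in\mathcal{C}}\rho(v)\leq ns\delta_{\mathrm{crit}}\,q^{k}$; combining the two yields $q^{k}(d_{N}-ns\delta_{\mathrm{crit}})\leq d_{N}$, and since the hypothesis $d_{N}>ns\delta_{\mathrm{crit}}$ makes the parenthesized factor positive, dividing produces~(\ref{Plotkin}).

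To obtain the upper bound I would decompose by block. Using $\rho(v)=\sum_{i=1}^{n}\rho(v_{i})$ together with the observation that the block projection $\pi_{i}\colon\mathcal{C}\to\mathbb{F}_{q}^{s}$ has image a linear subspace $W_{i}\subseteq\mathbb{F}_{q}^{s}$ with every fiber of constant size $q^{k}/|W_{i}|$, I can rewrite
\[
\sum_{v\in\mathcal{C}}\rho(v)\;=\;\sum_{i=1}^{n}\frac{q^{k}}{|W_{i}|}\sum_{w\in W_{i}}\rho(w).
\]
It therefore suffices to prove the uniform block-average inequality: for every nonzero subspace $W\subseteq\mathbb{F}_{q}^{s}$,
\[
\frac{1}{|W|}\sum_{w\in W}\rho(w)\;\leq\; s\delta_{\mathrm{crit}}.
\]

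For this averaging lemma I would use the layer expansion $\rho(w)=\sum_{j=1}^{s}\mathbf{1}[w\notin V_{j}]$, where $V_{j}=\mathrm{span}(e_{1},\ldots,e_{j-1})$, which turns the sum into $\sum_{w\in W}\rho(w)=s|W|-\sum_{j=1}^{s}|W\cap V_{j}|$. The numbers $|W\cap V_{j}|$ form a nondecreasing chain of powers of $q$ whose successive ratios lie in $\{1,q\}$ since $\dim(V_{j+1}/V_{j})=1$. A short combinatorial minimization shows that, under the constraint $|W\cap V_{s+1}|=q^{\dim W}$, the quantity $\sum_{j=1}^{s}|W\cap V_{j}|$ is minimized by pushing all the dimension jumps to the right, i.e.\ by taking $W=\mathrm{span}(e_{s-d+1},\ldots,e_{s})$; inserting this extremal value reduces the desired inequality to an elementary numerical check that holds for all $d\leq s$ and $q\geq 2$, with equality precisely at $W=\mathbb{F}_{q}^{s}$. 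This last equality case is exactly what singles out $\delta_{\mathrm{crit}}$.

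The main obstacle is this block-average lemma: the Plotkin argument needs a bound on $\frac{1}{|W|}\sum_{w}\rho(w)$ that is uniform over \emph{all} subspaces $W\subseteq\mathbb{F}_{q}^{s}$, not just over the ambient space, because the projections $W_{i}$ can a priori be arbitrary subspaces of $\mathbb{F}_{q}^{s}$. Once that uniform bound is in place, combining the upper bound on $\sum_{v}\rho(v)$ with the minimum-distance lower bound and rearranging is immediate.
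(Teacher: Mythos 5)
The paper itself does not prove this theorem --- it is imported verbatim from Rosenbloom and Tsfasman \cite{NRT} --- so there is no internal proof to compare against; your argument must stand on its own, and it does. Your proof is correct. The classical route behind \cite{NRT} bounds the average \emph{pairwise} distance of an arbitrary $M$-word code (for each layer $j$, counting codeword pairs in a common coset of $V_j$ via Cauchy--Schwarz gives $\sum_{u,v}\mathbf{1}[u-v\notin V_j]\le M^{2}\left(1-q^{j-1-s}\right)$, whence $M(M-1)d_{N}\le M^{2}ns\delta_{crit}$), which yields the bound without any linearity assumption; your version instead exploits linearity twice --- constant fiber size of the block projections $\pi_i$, and minimum distance equal to minimum nonzero weight --- which is legitimate since the statement here is for linear codes, though it proves a strictly weaker (linear-only) form of the original result. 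The one step you defer as ``a short combinatorial minimization'' does close: since $\dim(W\cap V_{j+1})-\dim(W\cap V_{j})\in\{0,1\}$ and only $s+1-j$ steps remain after position $j$, every subspace satisfies the pointwise bound $|W\cap V_{j}|\ge q^{\max(0,\,d-(s+1-j))}$ with $d=\dim W$, and substituting this extremal chain reduces your block-average lemma to the elementary inequality
\begin{equation*}
(s-d+1)(q-1)\;\ge\; q-q^{\,d-s},
\end{equation*}
which holds for all $1\le d\le s$ and $q\ge 2$, with equality exactly at $d=s$; this matches your claimed equality case $W=\mathbb{F}_{q}^{s}$ and the identity $s\delta_{crit}=s-\frac{q^{s}-1}{q^{s}(q-1)}$. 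Two minor points worth recording if you write this up: the case $W_i=\{0\}$ contributes $0$ to the block sum, so the ``nonzero subspace'' restriction in your lemma is harmless; and the final rearrangement from $(q^{k}-1)d_{N}\le ns\delta_{crit}\,q^{k}$ to (\ref{Plotkin}) uses the hypothesis $d_{N}>ns\delta_{crit}$ exactly where you say it does.
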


\begin{lemma}
$LCD[n\times 2,k]\leq\left(\frac{5}{2^{k}-1}\right)n<2n-k+1$ for $n>\left(\frac{2^{k}-1}{2^{k+1}-7}\right)(k-1)$.
\end{lemma}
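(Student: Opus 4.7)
The plan is to apply the Plotkin bound of Theorem~\ref{Plotkin} directly to any binary $[n\times 2,k,d_N]$ NRT-LCD code, treating the LCD hypothesis merely as ensuring the code has full linear dimension $k$ (no special properties of NRT-LCD codes are needed for an upper bound). First I would specialize the critical density to $q=s=2$: since
$$\delta_{\mathrm{crit}} = 1 - \frac{1}{s q^{s}}\cdot\frac{q^{s}-1}{q-1} = 1 - \frac{3}{8} = \frac{5}{8},$$
one has $ns\delta_{\mathrm{crit}} = 5n/4$. Assuming $d_{N} > 5n/4$ so that Theorem~\ref{Plotkin} applies, the Plotkin inequality $2^{k} \leq d_{N}/(d_{N} - 5n/4)$ rearranges to $(2^{k}-1)d_{N} \leq 2^{k}\cdot 5n/4$, and after simplification yields the claimed upper bound $d_{N} \leq 5n/(2^{k}-1)$. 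In the complementary regime $d_{N} \leq 5n/4$ one checks that this is already consistent with the stated bound for the range of $n$ and $k$ permitted by the hypothesis.

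For the second, strict inequality $\frac{5n}{2^{k}-1} < 2n - k + 1$, the verification is purely algebraic. Multiplying through by $2^{k}-1 > 0$ gives
$$5n < (2^{k}-1)(2n - k + 1) = (2^{k+1}-2)n - (2^{k}-1)(k-1),$$
and rearranging yields $(2^{k+1} - 7)n > (2^{k}-1)(k-1)$. For $k \geq 2$ the coefficient $2^{k+1}-7$ is positive, so the inequality is equivalent to $n > \frac{(2^{k}-1)(k-1)}{2^{k+1}-7}$, which is exactly the lemma's hypothesis. The consequence is that once $n$ crosses this threshold the NRT Singleton bound $d_{N}\leq 2n-k+1$ is no longer tight, which will subsequently be used to rule out maximum distance separable NRT-LCD codes in $M_{n,2}(\mathbb{F}_{2})$.

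The main obstacle is the bookkeeping in the first step: carrying the Plotkin inequality through to the compact form $\frac{5n}{2^{k}-1}$ and sealing the small-$d_{N}$ case $d_{N}\leq 5n/4$ where Theorem~\ref{Plotkin} does not directly apply. This is a calculation rather than a conceptual hurdle, but the arithmetic with the constant $5/8$ and the factor $q^{k}/(q^{k}-1)$ must be tracked carefully, and the edge cases for small $k$ (where $2^{k+1} - 7$ may fail to be positive) need to be excluded so that the final algebraic inversion is valid.
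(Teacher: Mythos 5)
Your proposal follows the paper's proof essentially step for step: specialize the NRT Plotkin bound to $q=s=2$ to get $\delta_{crit}=\frac{5}{8}$ and $ns\delta_{crit}=\frac{5n}{4}$, rearrange the resulting inequality, and then verify the second, strict inequality by direct algebra against the Singleton value $2n-k+1$. Your second half is in fact slightly more careful than the paper's, since you make explicit that $2^{k+1}-7>0$ only for $k\geq 2$, which is needed before dividing; the paper leaves this positivity implicit.

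However, one step you wrote down deserves a concrete flag, even though the paper's own proof asserts the identical thing: from $(2^{k}-1)d_{N}\leq 2^{k}\cdot\frac{5n}{4}$ one obtains $d_{N}\leq\frac{2^{k}}{2^{k}-1}\cdot\frac{5n}{4}=\frac{5\cdot 2^{k-2}}{2^{k}-1}\,n$, which coincides with the claimed bound $\frac{5}{2^{k}-1}n$ only when $k=2$; for $k\geq 3$ the factor $2^{k-2}$ does not cancel, so the phrase ``after simplification'' conceals a genuine non sequitur (shared with the paper, whose later applications are at $k=2$, where the constants agree). Relatedly, your disposal of the complementary regime $d_{N}\leq\frac{5n}{4}$ --- that it is ``already consistent with the stated bound'' --- is false under the stated constant for $k\geq 3$, since then $\frac{5n}{2^{k}-1}<\frac{5n}{4}$; it becomes automatic only under the corrected constant, because $\frac{5\cdot 2^{k-2}}{2^{k}-1}n\geq\frac{5n}{4}$ always holds. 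With the corrected constant both halves of your argument go through verbatim, at the cost of replacing the threshold on $n$ by $n>\frac{(2^{k}-1)(k-1)}{3\cdot 2^{k-2}-2}$, which agrees with the paper's threshold at $k=2$ (both give $n>3$), so the consequences actually used later (the nonexistence of MDS NRT-LCD codes in $M_{n,2}(\mathbb{F}_{2})$ for $k=2$) are unaffected.
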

\begin{proof}
Let's first prove that $LCD[n\times 2,k]\leq\left(\frac{5}{2^{k}-1}\right)n$. By the Plotkin bound (\ref{Plotkin}), any binary $[n\times 2,k,d_{N}]$ NRT metric code with $d_{N}>2n\delta_{crit}=\frac{5}{4}n$ satisfies
\begin{equation}
2^{k}\leq\frac{d_{N}}{d_{N}-\frac{5n}{4}}.
\end{equation}
Solving for $d_{N}$, we have $d_{N}\leq\left(\frac{5}{2^{k}-1}\right)n$. Therefore any $[n\times 2, k, d_{N}]$ NRT-LCD code must satisfy this inequality. proving that $LCD[n\times 2,k]\leq\left(\frac{5}{2^{k}-1}\right)n$. Now, note that
\begin{equation}
2n-k+1-\left(\frac{5}{2^{k}-1}\right)n>0\Longleftrightarrow\left(\frac{2^{k+1}-7}{2^{k}-1}\right)>k-1\Longleftrightarrow n>\left(\frac{2^{k}-1}{2^{k+1}-7}\right)(k-1).
\end{equation}
Proving that $\left(\frac{5}{2^{k}-1}\right)n<2n-k+1$ if $n>\left(\frac{2^{k}-1}{2^{k+1}-7}\right)(k-1)$.
\end{proof}

\begin{corollary}
Let $n,k$ be positive integers such that $n>\left(\frac{2^{k}-1}{2^{k+1}-7}\right)(k-1)$. Then, there is no binary maximum distance separable $[n\times 2, k]$ NRT-LCD code.
\end{corollary}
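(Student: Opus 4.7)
The plan is a direct deduction from the preceding lemma combined with the NRT Singleton bound (\ref{Singletonbound}). The structure is a very short proof by contradiction: assume a binary maximum distance separable $[n\times 2, k]$ NRT-LCD code $\mathcal{C}$ exists under the stated hypothesis on $n$, and derive a numerical conflict between the two available upper bounds on $d_N(\mathcal{C})$.

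First I would use the MDS hypothesis to pin down the minimum distance exactly. An $[n\times 2, k]$ NRT metric code is maximum distance separable precisely when its minimum distance saturates the Singleton bound, so setting $s=2$ in (\ref{Singletonbound}) gives $d_N(\mathcal{C}) = 2n - k + 1$. This is the only fact I draw from the MDS assumption.

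Next I would invoke the preceding lemma in full strength. Under the hypothesis $n > \left(\frac{2^k-1}{2^{k+1}-7}\right)(k-1)$ the lemma supplies the chain $LCD[n\times 2, k] \leq \frac{5n}{2^k-1} < 2n - k + 1$. Since $\mathcal{C}$ is NRT-LCD, its minimum distance is bounded above by $LCD[n\times 2, k]$, and substituting the MDS value forces
\[
2n - k + 1 = d_N(\mathcal{C}) \leq LCD[n\times 2, k] \leq \frac{5n}{2^k-1} < 2n - k + 1,
\]
which is the desired contradiction.

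I expect essentially no obstacle here: the corollary is a pure bookkeeping consequence of the lemma, and the numerical hypothesis on $n$ in the corollary has been engineered to coincide with the one in the second half of the lemma. The only very minor subtlety worth mentioning is that the lemma's bound $\frac{5n}{2^k-1}$ ultimately rests on the Plotkin bound (Theorem \ref{Plotkin}), whose hypothesis $d_N > \frac{5n}{4}$ must be checked; however this is either automatically satisfied by the MDS value $d_N = 2n-k+1$ once $n$ is large relative to $k$, or the lemma's conclusion already absorbs it into the stated upper bound on $LCD[n\times 2,k]$ that we are free to quote directly.
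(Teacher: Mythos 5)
Your proposal is correct and matches the paper's (implicit) argument exactly: the paper offers no separate proof because the corollary is the immediate combination of the preceding lemma's chain $LCD[n\times 2,k]\leq\left(\frac{5}{2^{k}-1}\right)n<2n-k+1$ with the fact that an MDS code would attain $d_{N}=2n-k+1$ from the Singleton bound, which is precisely your contradiction. Your closing caveat about the Plotkin hypothesis $d_{N}>\frac{5n}{4}$ is rightly resolved by quoting the lemma as a black box, since any issue there belongs to the lemma's proof rather than to this deduction.
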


\begin{corollary}
Let $n,k$ be positive integers such that $n>\left(\frac{2^{k}-1}{2^{k+1}-7}\right)(k-1)$. Then, $$LCD[n\times 2, k]<2n-k+1.$$
\end{corollary}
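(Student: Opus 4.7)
The plan is to observe that this corollary is an immediate consequence of the preceding lemma, so no new work is required beyond unpacking what was already established. The lemma proves the chain of inequalities
\[
LCD[n\times 2,k]\;\leq\;\left(\frac{5}{2^{k}-1}\right)n\;<\;2n-k+1
\]
under precisely the hypothesis $n>\left(\frac{2^{k}-1}{2^{k+1}-7}\right)(k-1)$. The rightmost strict inequality is exactly the claim to be proved.

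Concretely, I would write: Assume $n>\left(\frac{2^{k}-1}{2^{k+1}-7}\right)(k-1)$. By the lemma, $LCD[n\times 2,k]\leq \frac{5n}{2^{k}-1}$, and the same lemma shows $\frac{5n}{2^{k}-1}<2n-k+1$ under this hypothesis. Chaining these yields $LCD[n\times 2,k]<2n-k+1$, which is the desired conclusion.

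An equivalent framing is to note that the preceding corollary already establishes that no binary maximum distance separable $[n\times 2,k]$ NRT-LCD code exists under this hypothesis. Since the Singleton bound (\ref{Singletonbound}) gives $2n-k+1$ as the maximum possible NRT distance, and an MDS NRT-LCD code attaining this value is excluded, one concludes $LCD[n\times 2,k]<2n-k+1$. Either framing is essentially one line; there is no substantive obstacle, as the quantitative Plotkin-style estimate has already been done inside the lemma.
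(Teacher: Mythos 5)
Your proposal is correct and matches the paper exactly: the paper states this corollary without a separate proof, treating it as an immediate consequence of the preceding lemma, whose chain $LCD[n\times 2,k]\leq\left(\frac{5}{2^{k}-1}\right)n<2n-k+1$ under the stated hypothesis on $n$ is precisely what you invoke. Your alternative framing via the non-existence of maximum distance separable codes together with the Singleton bound is also sound and equivalent, so nothing is missing.
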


\begin{theorem}
We have the following:
\begin{itemize}
    \item[i)] $LCD[2\times 2,2]=2$;
    \item[ii)] $LCD[3\times 2,2]=5$;
\end{itemize}
\end{theorem}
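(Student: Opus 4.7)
The plan is to combine the Singleton bound (\ref{Singletonbound}) with a direct criterion for NRT-LCD that becomes especially simple when $s=2$ and $k=2$. Singleton yields $LCD[2\times 2,2]\leq 3$ and $LCD[3\times 2,2]\leq 5$, so (i) reduces to ruling out a $[2\times 2,2,3]$ NRT-LCD code and producing a $[2\times 2,2,2]$ NRT-LCD code, while (ii) reduces to an MDS construction at $[3\times 2,2,5]$.

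The simplification I would use is as follows. Writing $G$ as a $k\times ns$ matrix with rows $g_{1},g_{2}$, a direct expansion shows $(GG^{\dag})_{i,j}=\langle g_{i},g_{j}\rangle_{N}$. For $s=2$ over $\mathbb{F}_{2}$, $\langle g,g\rangle_{N}=\sum_{\ell=1}^{n}2\,g_{\ell,1}g_{\ell,2}=0$, so combined with the symmetry of this form
\begin{equation*}
GG^{\dag}=\langle g_{1},g_{2}\rangle_{N}\begin{pmatrix}0&1\\1&0\end{pmatrix},
\end{equation*}
and by Lemma \ref{GGo} the code is NRT-LCD if and only if $\langle g_{1},g_{2}\rangle_{N}=1$. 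Moreover, on rows $u,v\in\mathbb{F}_{2}^{2}$ the row-form $u_{1}v_{2}+u_{2}v_{1}$ equals $1$ exactly when $u$ and $v$ are nonzero and distinct.

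For the upper bound in (i), suppose a $[2\times 2,2,3]$ NRT-LCD code exists with generators $g_{1},g_{2}$. Since each of $g_{1},g_{2},g_{1}+g_{2}$ has NRT weight at least $3$ and any row contributes at most $2$, every nonzero codeword has both rows nonzero. Therefore at each $\ell\in\{1,2\}$ the three vectors $(g_{1})_{\ell},(g_{2})_{\ell},(g_{1}+g_{2})_{\ell}$ exhaust $\mathbb{F}_{2}^{2}\setminus\{0\}$, and in particular $(g_{1})_{\ell}$ and $(g_{2})_{\ell}$ are nonzero and distinct. The row-form then equals $1$ at each $\ell$, so $\langle g_{1},g_{2}\rangle_{N}=1+1=0$, contradicting the criterion. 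For the lower bound,
\begin{equation*}
G=\begin{pmatrix}0&1&1&1\\1&0&1&1\end{pmatrix}
\end{equation*}
generates a code whose three nonzero codewords have NRT weights $4,3,2$ and satisfies $\langle g_{1},g_{2}\rangle_{N}=1$.

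For (ii), I would exhibit
\begin{equation*}
G=\begin{pmatrix}1&1&1&1&1&0\\1&0&0&1&0&1\end{pmatrix}.
\end{equation*}
Read as $3\times 2$ matrices, the two rows of $G$ have row-weight profiles $(2,2,1)$ and $(1,2,2)$, and their sum has profile $(2,1,2)$, so every nonzero codeword has total NRT weight $5$. For each $\ell\in\{1,2,3\}$, the rows $(g_{1})_{\ell}$ and $(g_{2})_{\ell}$ are distinct nonzero vectors of $\mathbb{F}_{2}^{2}$, so $\langle g_{1},g_{2}\rangle_{N}=1+1+1=1$ and the code is NRT-LCD, matching the Singleton bound. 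The main obstacle is the structural step for (i); once NRT-LCD collapses to the single scalar $\langle g_{1},g_{2}\rangle_{N}$, both the impossibility argument and the constructions follow in a few lines.
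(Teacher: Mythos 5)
Your proof is correct, and I verified its ingredients: the expansion $(GG^{\dag})_{i,j}=\langle g_{i},g_{j}\rangle_{N}$ does hold (since $GG^{\dag}=\sum_{\ell}G_{\ell}\,flip(G_{\ell})^{T}$ and the $(a,b)$ entry of each summand is the per-block NRT pairing of rows $a$ and $b$), for $s=2$ each diagonal entry is $2g_{\ell,1}g_{\ell,2}=0$ over $\mathbb{F}_{2}$, the form is symmetric, and the per-row form $u_{1}v_{2}+u_{2}v_{1}$ equals $1$ exactly when $u,v\in\mathbb{F}_{2}^{2}$ are nonzero and distinct; your two generator matrices check out (nonzero weights $4,3,2$ with $\langle g_{1},g_{2}\rangle_{N}=1$ for (i), and $5,5,5$ with $\langle g_{1},g_{2}\rangle_{N}=1+1+1=1$ for (ii), matching Singleton). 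For the non-existence half of (i) your route is genuinely different from the paper's: the paper invokes the block-echelon standard form (Theorem \ref{marceloS}), argues that a zero last row in $G_{1}$ already forces $d_{N}\leq 2$, reduces to $G=\left[\begin{array}{cc|cc}1&0&a&b\\0&1&c&d\end{array}\right]$ with $(c,d)\neq(0,0)$, computes the antidiagonal entry $1+ad+bc$ of $GG^{\dag}$, and then finitely enumerates the four LCD generator matrices, checking each has $d_{N}\leq 2$. You instead collapse Lemma \ref{GGo} to the single scalar condition $\langle g_{1},g_{2}\rangle_{N}=1$ and derive a parity obstruction: $d_{N}=3$ forces every block-row pair of $g_{1},g_{2}$ to be nonzero and distinct, so each of the two blocks contributes $1$ and the pairing vanishes mod $2$. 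This is shorter, dispenses with both the standard-form theorem and the enumeration, and isolates the mechanism that actually governs these parameters --- the pairing is the number of ``nonzero-distinct'' blocks mod $2$ --- which simultaneously explains why your (and the paper's) $n=3$ MDS example in (ii) works ($1+1+1=1$) and parallels the odd-$n$ cancellation used in Theorem \ref{construction2}. What the paper's more mechanical approach buys in exchange is a complete list of the $[2\times 2,2]$ NRT-LCD codes in normal form, which your parity argument does not produce.
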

\begin{proof}
i) Let $\mathcal{C}$ be the binary $[2\times 2, 2]$ NRT metric code given by the following generator matrix
\begin{equation*}
    G=\left[\begin{array}{cc|cc}
    1&0&1&1\\
    0&1&1&1
    \end{array}\right].
\end{equation*}
It is easy to check that $GG^{\dag}$ is non-singular and $d_{N}(\mathcal{C})=2$. 

Let's prove that there is no binary $[2\times 2,2,3]$ NRT-LCD code. Note that by Theorem \ref{marceloS}, if $\mathcal{C}$ is a binary $[2\times 2, 2]$ NRT metric code, then $\mathcal{C}$ is equivalent to a code with generator matrix given by
$G=[G_{1}\mid G_{2}]$, where the nonzero rows of $G_1$ are distinct canonical vectors arranged in order of increasing NRT weight, and $G$ is in the block echelon form. Moreover, the last row of $G_{2}$ is non-zero, since $\dim(\mathcal{C})=2$.

If the last row of $G_{1}$ is zero then $\mathcal{C}$ has a codeword with a zero row. That is, in this case $d_{N}(\mathcal{C})\leq 2$. 

So, the only case that we have to analyze is the case where
\begin{equation}\label{Gen22}
    G=\left[\begin{array}{cc|cc}
    1&0&a&b\\
    0&1&c&d
    \end{array}\right],
\end{equation}
where $(c,d)\neq (0,0)$. In this case,
\begin{equation*}
    GG^{\dag}=\left[\begin{array}{cc}
    0&1+ad+bc\\
    1+ad+bc&0
    \end{array}\right]
\end{equation*}
and the generator matrices $G$ of the form (\ref{Gen22}) such that $GG^{\dag}$ is non-singular are
\begin{equation*}
    \left[\begin{array}{cc|cc}
    1&0&1&0\\
    0&1&1&0
    \end{array}\right],\left[\begin{array}{cc|cc}
    1&0&0&1\\
    0&1&0&1
    \end{array}\right],\left[\begin{array}{cc|cc}
    1&0&0&0\\
    0&1&1&1
    \end{array}\right],\left[\begin{array}{cc|cc}
    1&0&0&0\\
    0&1&0&1
    \end{array}\right].
\end{equation*}
The one can check that those matrices generates $[2\times 2,2]$ NRT-LCD codes with $d_{N}(\mathcal{C})\leq 2$.

\noindent ii) Let $\mathcal{C}$ be the binary $[2\times 2,2]$ NRT-metric code with generator matrix given by

\begin{equation*}
   G=\left[\begin{array}{cc|cc|cc}
    1&0&1&1&1&1\\
    0&1&0&1&1&0
    \end{array}\right].
\end{equation*}
It is easy to check that $GG^{\dag}=\left[\begin{array}{cc}
    0&1\\
    1&0
\end{array}\right]$. That is, $\mathcal{C}$ is an binary $[3\times 2,2]$ NRT-LCD code. Moreover,
\begin{equation*}
\mathcal{C}=\left\lbrace\left[\begin{array}{cc}
    0&0\\
    0&0\\
    0&0
\end{array}\right],\left[\begin{array}{cc}
    1&0\\
    1&1\\
    1&1
\end{array}\right],\left[\begin{array}{cc}
    0&1\\
    0&1\\
    1&0
\end{array}\right],\left[\begin{array}{cc}
    1&1\\
    1&0\\
    0&1
\end{array}\right]\right\rbrace
\end{equation*}
and $d_{N}(\mathcal{C})=5$.
\end{proof}

\begin{corollary}\label{ndif3}
There is a binary maximum distance separable $[n\times 2,2]$ NRT-LCD code if and only if $n=3$. 
\end{corollary}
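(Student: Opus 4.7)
The plan is to reduce this corollary to the two preceding results in this section: the explicit values $LCD[2\times 2,2]=2$ and $LCD[3\times 2,2]=5$ from the preceding theorem, and the Plotkin-based upper bound $LCD[n\times 2,k]<2n-k+1$ valid whenever $n>\left(\frac{2^{k}-1}{2^{k+1}-7}\right)(k-1)$ from the preceding corollary.

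For the sufficient direction, I would invoke the theorem directly: the $[3\times 2,2]$ NRT-LCD code constructed in part (ii) has $d_{N}=5=2\cdot 3-2+1$, which saturates the Singleton bound \eqref{Singletonbound} and is therefore maximum distance separable. For the necessary direction, I split on $n$. If $n=2$, then any MDS $[2\times 2,2]$ code would require $d_{N}=2\cdot 2-2+1=3$, but the theorem gives $LCD[2\times 2,2]=2<3$, ruling out such an NRT-LCD code. If $n\ge 4$, I specialize the preceding corollary at $k=2$: the threshold becomes $\frac{2^{2}-1}{2^{3}-7}(2-1)=3$, so the hypothesis $n>3$ is satisfied and the corollary yields $LCD[n\times 2,2]<2n-1$, again precluding an MDS NRT-LCD code.

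The content is essentially bookkeeping, so there is no real obstacle. The only point worth checking carefully is that the Plotkin threshold specialized at $k=2$ equals exactly $n=3$, so that the theorem (handling $n\in\{2,3\}$) and the corollary (handling $n\ge 4$) meet cleanly at the boundary and together exhaust all relevant values of $n$; once this matching is observed, the corollary follows immediately from the two preceding statements.
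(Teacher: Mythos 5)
Your derivation is correct and is exactly the paper's intended route: the paper supplies no separate argument for this corollary, which is meant as the immediate combination of the preceding theorem (giving $LCD[2\times 2,2]=2<3=2\cdot 2-2+1$, ruling out $n=2$, and $LCD[3\times 2,2]=5=2\cdot 3-2+1$, giving existence at $n=3$) with the Plotkin-based corollary specialized at $k=2$, whose threshold $\frac{2^{2}-1}{2^{3}-7}(2-1)=3$ handles every $n\geq 4$, precisely the bookkeeping you carry out. One caveat, which applies equally to the paper's own statement: your case analysis starts at $n=2$ and never addresses $n=1$, where the claim as literally written fails, since the full space $M_{1,2}(\mathbb{F}_{2})$ is a $[1\times 2,2,1]$ NRT-LCD code meeting the Singleton bound (\ref{Singletonbound}) (this is also the $s=k=2$ instance of Theorem \ref{MDSexistenceforkeven}), so the corollary must be read with the implicit restriction $n\geq 2$ --- a defect of the statement rather than of your reduction.
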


\begin{theorem}
Let $n\neq 3$ be a positive integer. Then
$$n\leq LCD[n\times 2,2]\leq 2n-1.$$
\end{theorem}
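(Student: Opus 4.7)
The upper bound $LCD[n\times 2,2]\leq 2n-1$ follows immediately from the Singleton bound (\ref{Singletonbound}) applied with $k=s=2$; the hypothesis $n\neq 3$ is not needed here (it only signals that, thanks to Corollary \ref{ndif3}, the bound $2n-1$ is in fact never attained off $n=3$).

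For the lower bound, the plan is to exhibit, for every positive integer $n$, a binary $[n\times 2,2]$ NRT-LCD code $\mathcal{C}$ with $d_{N}(\mathcal{C})\geq n$. Write the generator matrix as $G=[G_{1}\mid G_{2}\mid\cdots\mid G_{n}]$ with each $G_{i}\in M_{2,2}(\mathbb{F}_{2})$. A direct computation analogous to the one used in the case $LCD[2\times 2,2]=2$ shows that for $G_{i}=\left[\begin{array}{cc} a & b \\ c & d \end{array}\right]$ one has $G_{i}G_{i}^{\circ}=\left[\begin{array}{cc} 2ab & ad+bc \\ ad+bc & 2cd \end{array}\right]=\det(G_{i})\cdot\left[\begin{array}{cc} 0 & 1 \\ 1 & 0 \end{array}\right]$ over $\mathbb{F}_{2}$. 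Combined with $GG^{\dag}=\sum_{i=1}^{n}G_{i}G_{i}^{\circ}$, Lemma \ref{GGo} then produces the clean criterion: $\mathcal{C}$ is NRT-LCD if and only if an odd number of the blocks $G_{i}$ are invertible.

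With this criterion in hand, I would split on the parity of $n$. When $n$ is odd, take $G_{i}=I_{2}$ for every $i$; the $n$ invertible blocks make $GG^{\dag}$ nonsingular, and the three nonzero codewords of $\mathcal{C}$ are the all-$(1,0)$, all-$(0,1)$, and all-$(1,1)$ block patterns, with NRT weights $n,\,2n,\,2n$ respectively, so $d_{N}(\mathcal{C})=n$. When $n$ is even, replace the first block by the singular matrix $\left[\begin{array}{cc} 1 & 0 \\ 1 & 0 \end{array}\right]$ and keep $G_{i}=I_{2}$ for $i\geq 2$; this leaves $n-1$ invertible blocks, an odd number, and a direct check of the three nonzero codewords yields NRT weights $n,\,2n-1,\,2n-2$, giving $d_{N}(\mathcal{C})=n$ again.

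The main (mild) obstacle is the parity issue: the uniform construction $[I_{2}\mid\cdots\mid I_{2}]$ meets the NRT-LCD criterion only when $n$ is odd, so when $n$ is even one block must be made singular to restore odd parity. The verification point in that case is that the codeword of smallest weight -- namely the sum of the two generator rows, which now has a zero pair in the first block -- still has NRT weight at least $n$, thanks to the contribution $2(n-1)$ coming from the remaining $n-1$ copies of $I_{2}$.
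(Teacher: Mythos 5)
Your proposal is correct and takes essentially the same route as the paper: the upper bound comes from the Singleton bound (\ref{Singletonbound}) (the paper cites Corollary \ref{ndif3}, which for $n\neq 3$ even rules out attaining $2n-1$), and your lower-bound codes are the paper's codes in disguise --- all blocks equal to $I_{2}$ when $n$ is odd, and exactly one singular block inserted to restore odd parity when $n$ is even (the paper truncates the last entry of the second generator, you use the block with equal rows $(1,0)$; both yield the weight multiset $\lbrace n,2n-1,2n-2\rbrace$). Your intermediate criterion, that $GG^{\dag}=\left(\sum_{i=1}^{n}\det G_{i}\right)\cdot\left[\begin{array}{cc}0&1\\1&0\end{array}\right]$ so that $\mathcal{C}$ is NRT-LCD precisely when an odd number of the $2\times 2$ blocks is invertible, is a tidy packaging of the verification the paper performs directly, but it does not change the substance of the argument.
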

\begin{proof}
$LCD[n\times 2,2]\leq 2n-1$ by Corollary \ref{ndif3}. Let's check that $LCD[n\times 2,2]\geq n$. Note that, if $n$ is odd, then the binary $[n\times 2, 2]$ NRT-LCD code
\begin{equation*}
\mathcal{C}=\left\lbrace 0,\left[\begin{array}{c|c}
\sum_{i=1}^{n}e^{T}_{i}&0^{T}
\end{array}\right],\left[\begin{array}{c|c}
0^{T}&\sum_{i=1}^{n}e^{T}_{i}
\end{array}\right],\left[\begin{array}{c|c}
\sum_{i=1}^{n}e^{T}_{i}&\sum_{i=1}^{n}e^{T}_{i}
\end{array}\right]\right\rbrace.
\end{equation*}
is such that $d_{N}(\mathcal{C})=n$.

Now, if $n$ is even, then the binary $[n\times 2,2]$ NRT-LCD code
\begin{equation*}
\mathcal{C}=\left\lbrace 0,\left[\begin{array}{c|c}
\sum_{i=1}^{n}e^{T}_{i}&0^{T}
\end{array}\right],\left[\begin{array}{c|c}
0^{T}&\sum_{i=1}^{n-1}e^{T}_{i}
\end{array}\right],\left[\begin{array}{c|c}
\sum_{i=1}^{n}e^{T}_{i}&\sum_{i=1}^{n-1}e^{T}_{i}
\end{array}\right]\right\rbrace.
\end{equation*}
is such that $d_{N}(\mathcal{C})=n$.
\end{proof}

\subsection{A linear programming bound for binary NRT-LCD codes in $M_{n,2}(\mathbb{F}_{2})$}
In this section, we present a linear programming bound for binary NRT-LCD codes. We will make use of generalized MacWilliams identity for NRT metric codes. The MacWilliams identity was proven with different approaches in \cite{BargAndPurka, H.Trinker,Siap,DS}. We will present this identity for the case where $s=2$ to suit to the context of the present paper.

\begin{definition}
Let $v\in M_{n,2}(\mathbb{F}_{2})$ written as $[v_{1};v_{2};\ldots,v_{n}]$, where $v_{i}=[v_{i,1},v_{i,2}]$ for $i=1,\ldots, n$. The shape of $v$ with respect to the NRT weight is an $2$-vector $shape(v)=(e_{1},e_{2})$, where
\begin{equation}
e_{j}=\mid\lbrace i: 1\leq i\leq n\text{ and }\rho(v_{i})=j\rbrace\mid.
\end{equation}
Also define $e_{0}=n-\mid shape(v)\mid$, where $\mid shape(v)\mid=e_{1}+e_{2}$.
\end{definition}
A combinatorial calculation shows that the number og $v\in M_{n,2}(\mathbb{F}_{2})$ such that $shape(e_1,e_{2})$ is given by
\begin{equation}
\binom{n}{e_{0},e_{1},e_{2}}2^{e_{2}}.
\end{equation}

Note that, if for $v\in M_{n,2}(\mathbb{F}_{2})$, we define $\mid shape(e)\mid^{\prime}=e_{1}+2e_{2}$, then $\rho(v)=\mid shape(e)\mid^{\prime}$.

The shape enumerator of $\mathcal{C}\subseteq M_{n,2}(\mathbb{F}_{2})$ is the polynomial 
\begin{eqnarray*}
 H_{\mathcal{C}}(z_{0},z_{1},z_{2})=\sum_{e\in\Delta_{n,2}}\mathcal{A}_{e}z^{e_0}_{0}z^{e_1}_{1}z^{e_2}_{2},
\end{eqnarray*}
where $\mathcal{A}_{e}:=\mid\lbrace v\in\mathcal{C}:shape(v)=e\rbrace\mid$ and $\Delta_{n,2}=\lbrace e\in\mathbb{N}^{2}:e_{0}+e_{1}+e_{2}=n\rbrace$.

\begin{theorem}(\cite{H.Trinker})\label{MacWilliamsShape}
Let $\mathcal{C}\subseteq M_{n,2}(\mathbb{F}_{2})$ be an $[n\times 2,k]$ NRT metric code. Further on let $\left(\mathcal{A}_{e}\right)$ be the shape distribution of $\mathcal{C}$ and $\left(\mathcal{A}^{\prime}_{e}\right)$ be the shape distribution of $\mathcal{C}^{\perp_{N}}$. Then
\begin{equation*}
    \mathcal{A}^{\prime}_{e}=\frac{1}{2^{k}}\sum_{e^{\star}\in\Delta_{n,2}}B_{e}(e^{\star})\mathcal{A}_{e^{\star}},
\end{equation*}
where
\begin{equation*}
B_{e}(e^{\star})=2^{e_{2}}\prod^{2}_{t=1}\left(\sum_{j_{t}=0}^{e_t}(-1)^{j_t}\binom{e^{\star}_{3-t}}{j_{t}}\binom{\sum_{r=0}^{2-t}e^{\star}_{r}-\sum_{r=t+1}^{2}e^{\star}_{r}}{e_{t}-j_{t}}\right).
\end{equation*}
\end{theorem}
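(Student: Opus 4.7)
The plan is to prove this MacWilliams-type identity by the standard character / Fourier approach, exploiting the fact that the shape enumerator is multiplicative across the $n$ rows. Let $\chi:\mathbb{F}_{2}\to\{\pm 1\}$ denote the nontrivial additive character $\chi(a)=(-1)^{a}$, and view $M_{n,2}(\mathbb{F}_{2})$ as a finite abelian group equipped with the nondegenerate bilinear pairing $\langle\cdot,\cdot\rangle_{N}$ (nondegeneracy on each row is immediate from testing against $(1,0)$ and $(0,1)$). Because this pairing is nondegenerate, $\mathcal{C}^{\perp_{N}}$ is precisely the character-theoretic annihilator of $\mathcal{C}$ with respect to the characters $v\mapsto\chi(\langle u,v\rangle_{N})$, so the Poisson summation formula for subgroups of a finite abelian group applies.

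The first step is to package the shape enumerator as a sum of a multiplicative function. Define $f:M_{n,2}(\mathbb{F}_{2})\to\mathbb{C}[z_{0},z_{1},z_{2}]$ by $f(v)=\prod_{i=1}^{n}z_{\rho(v_{i})}$; then $H_{\mathcal{C}}(z_{0},z_{1},z_{2})=\sum_{v\in\mathcal{C}}f(v)$, and likewise $H_{\mathcal{C}^{\perp_{N}}}=\sum_{u\in\mathcal{C}^{\perp_{N}}}f(u)$. Poisson summation gives
\begin{equation*}
\sum_{u\in\mathcal{C}^{\perp_{N}}}f(u)=\frac{1}{|\mathcal{C}|}\sum_{v\in\mathcal{C}}\widehat{f}(v),\qquad \widehat{f}(v)=\sum_{u\in M_{n,2}(\mathbb{F}_{2})}f(u)\,\chi(\langle u,v\rangle_{N}).
\end{equation*}
Since the pairing splits row-wise as $\langle u,v\rangle_{N}=\sum_{i=1}^{n}\langle u_{i},v_{i}\rangle_{N}$ and $f$ factors over rows, the transform factors as well: $\widehat{f}(v)=\prod_{i=1}^{n}\widehat{f}_{i}(v_{i})$ with $\widehat{f}_{i}(v_{i})=\sum_{u_{i}\in\mathbb{F}_{2}^{2}}z_{\rho(u_{i})}\chi(\langle u_{i},v_{i}\rangle_{N})$.

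Next, compute $\widehat{f}_{i}$ by direct enumeration. For $s=2$ there are only four row-vectors to test, and the fibres of $\rho$ on $\mathbb{F}_{2}^{2}$ are $\{(0,0)\}$, $\{(1,0)\}$ and $\{(0,1),(1,1)\}$. A short case analysis shows that $\widehat{f}_{i}(v_{i})$ depends only on $\rho(v_{i})$, taking the values $z_{0}+z_{1}+2z_{2}$, $z_{0}+z_{1}-2z_{2}$ and $z_{0}-z_{1}$ for $\rho(v_{i})=0,1,2$ respectively; the coefficient $2$ and the vanishing of the $z_{2}$-term in the last case record the size of the largest fibre. Consequently, for $v$ of shape $(e^{\star}_{1},e^{\star}_{2})$ with $e^{\star}_{0}=n-e^{\star}_{1}-e^{\star}_{2}$,
\begin{equation*}
\widehat{f}(v)=(z_{0}+z_{1}+2z_{2})^{e^{\star}_{0}}(z_{0}+z_{1}-2z_{2})^{e^{\star}_{1}}(z_{0}-z_{1})^{e^{\star}_{2}}.
\end{equation*}

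Regrouping the Poisson sum by shape and using $|\mathcal{C}|=2^{k}$ yields
\begin{equation*}
H_{\mathcal{C}^{\perp_{N}}}(z_{0},z_{1},z_{2})=\frac{1}{2^{k}}\sum_{e^{\star}\in\Delta_{n,2}}\mathcal{A}_{e^{\star}}(z_{0}+z_{1}+2z_{2})^{e^{\star}_{0}}(z_{0}+z_{1}-2z_{2})^{e^{\star}_{1}}(z_{0}-z_{1})^{e^{\star}_{2}},
\end{equation*}
and matching the coefficient of $z_{0}^{e_{0}}z_{1}^{e_{1}}z_{2}^{e_{2}}$ on both sides recovers $\mathcal{A}'_{e}$ as a $2^{-k}$-weighted sum of $\mathcal{A}_{e^{\star}}$ times the requisite coefficient. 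The main obstacle is the final bookkeeping step: one expands the $(z_{0}+z_{1}\pm 2z_{2})$ factors by the trinomial theorem and the $(z_{0}-z_{1})$ factor by the binomial theorem, pulls all powers of $2$ outside as the overall factor $2^{e_{2}}$, and relabels the internal summation indices as $j_{1},j_{2}$ so that the resulting coefficient coincides with the product-of-sums expression for $B_{e}(e^{\star})$ in the statement. This combinatorial identification is routine but notationally delicate, and is where one has to check that the sign conventions and the index ranges $0\leq j_{t}\leq e_{t}$ line up correctly with the two linear combinations $e^{\star}_{0}+e^{\star}_{1}-e^{\star}_{2}$ and $e^{\star}_{0}-e^{\star}_{1}-e^{\star}_{2}$ that appear in the displayed formula.
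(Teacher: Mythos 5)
Your overall strategy is sound, and since the paper offers no proof of this theorem (it is quoted from Trinker's paper), a direct derivation is the right thing to attempt; yours is the standard character-theoretic one. I verified your key computations: with the flip pairing $\langle u_{i},v_{i}\rangle_{N}=\sum_{j}u_{i,j}v_{i,s-j+1}$ (you silently and correctly repaired the paper's typo $u_{i,j}u_{i,s-j+1}$), the row transform $\widehat{f}_{i}(v_{i})$ indeed depends only on $\rho(v_{i})$ and equals $z_{0}+z_{1}+2z_{2}$, $z_{0}+z_{1}-2z_{2}$, $z_{0}-z_{1}$ for $\rho(v_{i})=0,1,2$; note this step would \emph{fail} for the ordinary dot product (the two vectors of NRT weight $2$ give different transforms there), so your insistence on $\langle\cdot,\cdot\rangle_{N}$ is essential, not cosmetic. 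Poisson summation and the row-wise factorization are also applied correctly.

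The genuine gap is the step you wave off as ``routine bookkeeping'': carried out, it does \emph{not} land on the displayed $B_{e}(e^{\star})$. Writing $(e_{0},e_{1},e_{2})$ for shapes, extract the coefficient of $z_{0}^{e_{0}}z_{1}^{e_{1}}z_{2}^{e_{2}}$ from $(z_{0}+z_{1}+2z_{2})^{e^{\star}_{0}}(z_{0}+z_{1}-2z_{2})^{e^{\star}_{1}}(z_{0}-z_{1})^{e^{\star}_{2}}$: choosing $j_{2}$ factors $-2z_{2}$ and $e_{2}-j_{2}$ factors $2z_{2}$ gives $2^{e_{2}}\sum_{j_{2}}(-1)^{j_{2}}\binom{e^{\star}_{1}}{j_{2}}\binom{e^{\star}_{0}}{e_{2}-j_{2}}$, and choosing $j_{1}$ factors $-z_{1}$ from the last factor leaves the remaining $e_{1}-j_{1}$ copies of $z_{1}$ to be spread over the first two factors, which Vandermonde collapses to $\binom{e^{\star}_{0}+e^{\star}_{1}-e_{2}}{e_{1}-j_{1}}$ --- with the \emph{unstarred} $e_{2}$, because after the $z_{2}$-selections those factors have joint residual degree $e^{\star}_{0}+e^{\star}_{1}-e_{2}$. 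The statement as printed has $e^{\star}_{0}+e^{\star}_{1}-e^{\star}_{2}$ instead (and your own gloss of the binomial tops is also a misreading: for $t=2$ the top is simply $e^{\star}_{0}$, since $\sum_{r=t+1}^{2}$ is empty, not $e^{\star}_{0}-e^{\star}_{1}-e^{\star}_{2}$). The discrepancy is real, not notational: take $n=2$, $k=1$, $\mathcal{C}$ spanned by the matrix with rows $(0,1)$ and $(0,0)$, so $\mathcal{C}^{\perp_{N}}=\lbrace u:u_{1,1}=0\rbrace$ and $\mathcal{A}^{\prime}_{(1,1,0)}=1$; the printed formula yields $\frac{1}{2}\bigl(2\cdot\mathcal{A}_{(2,0,0)}-1\cdot\mathcal{A}_{(1,0,1)}\bigr)=\frac{1}{2}$, while the corrected top gives $1$. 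So the theorem's display contains a transcription error relative to Trinker's identity (the subtracted sum should read $\sum_{r=t+1}^{2}e_{r}$), and your final sentence asserting the expansion ``coincides with the statement'' is false as written. To complete your proof you must perform the expansion explicitly, as above, and conclude with the corrected coefficient; everything upstream of that point is correct and complete.
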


Note that if $\mathcal{C}$ is a binary $[n\times 2,k]$ NRT-LCD code and $\mathcal{A}_{e}$ its shape distribution. Let $\mathcal{A}_{e}^{\prime}$ be the shape distribution of its dual code $\mathcal{C}^{\perp_{N}}$. By the definition of NRT-LCD code, if $v\in\mathcal{C}$ then $v\notin\mathcal{C}^{\perp_{N}}$, so
\begin{equation}\label{A_e+A_eP}
\mathcal{A}_{e}+\mathcal{A}^{\prime}_{e}\leq\binom{n}{e_{0},e_{1},e_{2}}2^{e_{2}}.
\end{equation}
By the Theorem \ref{MacWilliamsShape}, we have that 
\begin{equation}\label{A_eP}
\mathcal{A}^{\prime}_{e}=\frac{1}{2^{k}}\sum_{e^{\star}\in\Delta_{n,2}}B_{e}(e^{\star})\mathcal{A}_{e^{\star}}.
\end{equation}
Moreover, since $\dim\mathcal{C}=k$, we have that 
\begin{equation}\label{2k}
2^{k}=\sum_{e^{\star}\in\Delta_{n,2}}\mathcal{A}_{e^{\star}}.
\end{equation}
Combining (\ref{A_e+A_eP}), (\ref{A_eP}) and (\ref{2k}). We have the following bound for binary NRT-LCD codes.

\begin{theorem}
If $\mathcal{C}$ is a binary $[n\times 2,k]$ NRT-LCD code. Then,
\begin{equation}
2^{k}\mathcal{A}_{e}\leq\sum_{e^{\star}\in\Delta_{n,2}}\mathcal{A}_{e
^{\star}}\left[\binom{n}{e_{0},e_{1},e_{2}}-B_{e}(e^{\star})\right].
\end{equation}
\end{theorem}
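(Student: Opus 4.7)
The plan is to chain together the three facts the excerpt has already established: the disjointness bound (\ref{A_e+A_eP}) forced by the NRT-LCD hypothesis, the MacWilliams-type shape identity (\ref{A_eP}) from Theorem \ref{MacWilliamsShape}, and the dimension count (\ref{2k}). Each step is a straightforward algebraic substitution.

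First, I would justify (\ref{A_e+A_eP}) in one line. Since $\mathcal{C}\cap\mathcal{C}^{\perp_{N}}=\{0\}$, every nonzero $v\in M_{n,2}(\mathbb{F}_2)$ lies in at most one of $\mathcal{C}$ and $\mathcal{C}^{\perp_N}$. Partitioning $M_{n,2}(\mathbb{F}_2)$ by shape and invoking the combinatorial count $\binom{n}{e_0,e_1,e_2}2^{e_2}$ for the number of matrices of shape $e=(e_0,e_1,e_2)$ then gives $\mathcal{A}_e+\mathcal{A}'_e\leq\binom{n}{e_0,e_1,e_2}2^{e_2}$ for every $e\in\Delta_{n,2}$; the zero-shape case is handled separately since the zero matrix belongs to both codes.

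Second, I would substitute the MacWilliams-type expression (\ref{A_eP}) for $\mathcal{A}'_e$ into the inequality from the previous step, clear the factor $1/2^k$ by multiplying through by $2^k$, and finally use (\ref{2k}) to rewrite $2^k\binom{n}{e_0,e_1,e_2}2^{e_2}$ as $\sum_{e^{\star}\in\Delta_{n,2}}\mathcal{A}_{e^{\star}}\binom{n}{e_0,e_1,e_2}2^{e_2}$. Moving the sum involving $B_e(e^{\star})$ to the right-hand side and combining the two sums over $e^{\star}$ yields the claimed bound, with the bracketed expression containing the difference of the multinomial term and $B_e(e^{\star})$.

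There is no substantive obstacle: the only place where the NRT-LCD hypothesis enters is the disjointness step producing (\ref{A_e+A_eP}), and everything afterward is linear bookkeeping. The point that merits the most care is tracking the factor $2^k$, the multinomial coefficient, and the shape-enumeration weight $2^{e_2}$ through the multiplication and rearrangement so that they land inside the bracket on the right-hand side in their correct positions.
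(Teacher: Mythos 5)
Your proposal is correct and takes exactly the paper's route: the paper proves this theorem by the same three-step chain, combining the LCD disjointness bound (\ref{A_e+A_eP}), the shape MacWilliams identity (\ref{A_eP}) from Theorem \ref{MacWilliamsShape}, and the dimension count (\ref{2k}), then multiplying by $2^{k}$ and rearranging. Two details you flag are genuine improvements on the printed version: carried through honestly, the algebra places the weight $2^{e_{2}}$ inside the bracket, giving $2^{k}\mathcal{A}_{e}\leq\sum_{e^{\star}\in\Delta_{n,2}}\mathcal{A}_{e^{\star}}\left[\binom{n}{e_{0},e_{1},e_{2}}2^{e_{2}}-B_{e}(e^{\star})\right]$, so the theorem as stated drops a factor of $2^{e_{2}}$; and your separate treatment of the zero shape is in fact necessary, since $0\in\mathcal{C}\cap\mathcal{C}^{\perp_{N}}$ forces $\mathcal{A}_{e}+\mathcal{A}^{\prime}_{e}=2>1$ at $e=(0,0)$, where moreover $B_{e}(e^{\star})=1$ and the claimed inequality would read $2^{k}\leq 0$, so the bound holds for nonzero shapes only---a restriction the paper leaves implicit.
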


\section{Constructions}\label{constructions}
In this section, we present two new constructions of binary Niederreiter-Rosenbloom-Tsfasman linear complementary dual codes.

The first construction (Theorem \ref{construction1}) gives a binary NRT-LCD code in $M_{1,s+2}(\mathbb{F}_{2})$ from a binary NRT-LCD code in $M_{1,s+2}(\mathbb{F}_{2})$, where $s$ is odd. The second construction (Theorem \ref{construction2}) gives a binary NRT-LCD code in $M_{n,s+2}(\mathbb{F}_{2})$,  from a binary NRT-LCD code in $M_{1,s}(\mathbb{F}_{2})$, where $s$ and $n$ are odd.

\begin{theorem}\label{construction1}
Let $\mathcal{C}$ be a binary $[1\times s,k]$ NRT-LCD code with generator matrix
\begin{equation}
G:=\left[\begin{array}{c}
g_{1}\\
g_{2}\\
\vdots\\
g_{k}
\end{array}\right]=\left[\begin{array}{cccc}
g_{1,1}&g_{1,2}&\ldots&g_{1,s}\\
g_{2,1}&g_{2,2}&\ldots&g_{2,s}\\
\vdots&\vdots&\ddots&\vdots\\
g_{k,1}&g_{k,2}&\ldots&g_{k,s}
\end{array}\right].
\end{equation}
Let $x=(x_1,x_2,\ldots,x_s)\in\mathbb{F}^{s}_{2}$ and $\mathcal{C}(x)$ be the binary code with the following generator matrix
\begin{equation}
G(x)=\left[\begin{array}{l|lll|l}
1&x_{1}&\ldots&x_{s}&0\\
\hline
\langle x,g_1\rangle_{N}&g_{1,1}&\ldots&g_{1,s}&\langle x,g_1\rangle_{N}\\
\vdots&\vdots&\ddots&\vdots&\vdots\\
\langle x,g_k\rangle_{N}&g_{k,1}&\cdots&g_{k,s}&\langle x,g_k\rangle_{N}
\end{array}\right].
\end{equation}
If $s$ is odd and $x_{\frac{s+1}{2}}=1$ and $x_{j}=0$ for $ j>\frac{s+1}{2}$. Then, $\mathcal{C}(x)$ is a binary $[1\times (s+2),k+1]$ NRT-LCD code.
\end{theorem}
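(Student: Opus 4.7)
The plan is to verify directly that $G(x)\, G(x)^{\circ}$ is nonsingular; then by Lemma \ref{GGo} (with $n=1$, so $G(x)^{\dag}=G(x)^{\circ}$) the code $\mathcal{C}(x)$ is NRT-LCD, and the same nonsingularity certifies that the rows of $G(x)$ are linearly independent, so $\dim\mathcal{C}(x)=k+1$ as required. First I would abbreviate $y_i := \langle x, g_i\rangle_{N} = \sum_{j=1}^{s} x_j g_{i,s-j+1}$ and, by applying $\mathrm{flip}$ to the rows of $G(x)$ and transposing, obtain
\[
G(x)^{\circ} = \left[\begin{array}{c|ccc}
0 & y_1 & \cdots & y_k \\
x_s & g_{1,s} & \cdots & g_{k,s} \\
\vdots & \vdots & & \vdots \\
x_1 & g_{1,1} & \cdots & g_{k,1} \\
1 & y_1 & \cdots & y_k
\end{array}\right].
\]
The repetition of $y_i$ in both the top and bottom rows of $G(x)^{\circ}$, mirroring the $y_i$'s at the two ends of each non-distinguished row of $G(x)$, is precisely what will drive the cancellations below.

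Next I would compute $G(x)\, G(x)^{\circ}$ entry-by-entry according to its natural $2\times 2$ block partition (the first row and first column, versus the remaining $k\times k$ block). The $(0,0)$-entry equals $\sum_{t=1}^{s} x_t x_{s-t+1}$; the hypothesis $x_{(s+1)/2}=1$ with $x_j=0$ for $j > (s+1)/2$ forces $x_{s-t+1}=0$ for $t < (s+1)/2$ and $x_t=0$ for $t > (s+1)/2$, so only the self-paired index $t=(s+1)/2$ survives and the entry is $1$. For $j\ge 1$ the $(0,j)$-entry is $1\cdot y_j + \sum_t x_t g_{j,s-t+1} + 0\cdot y_j = 2y_j = 0$ in $\mathbb{F}_{2}$. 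For $i\ge 1$ the $(i,0)$-entry is $y_i\cdot 0 + \sum_t g_{i,t} x_{s-t+1} + y_i\cdot 1$; reindexing $u=s-t+1$ identifies the middle sum as $y_i$, so the entry equals $2y_i = 0$. For $i,j\ge 1$ the $(i,j)$-entry is $y_i y_j + \sum_t g_{i,t} g_{j,s-t+1} + y_i y_j = (GG^{\circ})_{ij}$, again by characteristic $2$.

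Assembling the four blocks yields
\[
G(x)\, G(x)^{\circ} = \begin{bmatrix} 1 & 0 \\ 0 & GG^{\circ} \end{bmatrix},
\]
whose determinant is $\det(GG^{\circ})\neq 0$ by the hypothesis that $\mathcal{C}$ is NRT-LCD. Lemma \ref{GGo} then delivers the conclusion. The main obstacle is bookkeeping rather than anything delicate: I would need to track the reversed ordering of rows in $G(x)^{\circ}$ and to justify the reindexing that turns $\sum_t g_{i,t} x_{s-t+1}$ back into $y_i$. The arithmetic hypothesis on $x$ is doing exactly the work of forcing the top-left corner to equal $1$ (equivalently, of making $x$ satisfy $\langle x,x\rangle_{N}=1$ in $\mathbb{F}_{2}$), while all other cancellations come for free from characteristic $2$.
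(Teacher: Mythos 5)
Your proof is correct and takes essentially the same route as the paper's: both compute $G(x)^{\circ}$ explicitly, show (using characteristic $2$ and the symmetry of $\langle\cdot,\cdot\rangle_{N}$) that $G(x)G(x)^{\circ}$ is block diagonal with blocks $\langle x,x\rangle_{N}=1$ and $GG^{\circ}$, and conclude nonsingularity via Lemma \ref{GGo}. Your entry-by-entry verification and the observation that nonsingularity of $G(x)G(x)^{\circ}$ also certifies $\dim\mathcal{C}(x)=k+1$ are, if anything, slightly more careful than the paper's presentation.
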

\begin{proof}
It is clear that $\mathcal{C}(x)$ is a binary $[1\times(s+2),k+1]$ NRT metric code.
Note that
\begin{equation}
G^{\circ}(x)=\left[\begin{array}{l|llll}
0&\langle x,g_1\rangle_{N}&\ldots&\langle x,g_{k-1}\rangle_{N}&\langle x,g_{k}\rangle_{N}\\
\hline
x_{s}&g_{2,s}&\ldots&g_{k-1,s}&g_{k,s}\\
x_{s-1}&g_{2,s-1}&\ldots&g_{k-1,s-1}&g_{k,s-1}\\
\vdots&\vdots&\ddots&\vdots&\vdots\\
x_{1}&g_{1,1}&\ldots&g_{k-1,1}&g_{k,1}\\
\hline
1&\langle x,g_{1}\rangle_{N}&\cdots&\langle x,g_{k}\rangle_{N}&\langle x,g_{k}\rangle_{N} 
\end{array}\right].
\end{equation}
Thus,
\begin{equation*}
G(x)G^{\circ}(x)=\left[\begin{array}{c|c}
\langle x,x\rangle_{N}&\textbf{0}\\
\hline
&\multirow{2}{2cm}{$GG^{\circ}$}\\
\textbf{0}^{T}& \\
\end{array}\right],\text{ where }\textbf{0}=(0,0,\ldots,0)\in\mathbb{F}^{k}_{2}.\\
\end{equation*}
Note also that $\langle x,x\rangle_{RT}=\left(x_{\frac{s+1}{2}}\right)^2=1$. That is, $G(x)G^{\circ}(x)$ is nonsigular. Proving that $\mathcal{C}(x)$ is a binary $[1\times(s+2),k+1]$ NRT-LCD code.
\end{proof}

\begin{theorem}\label{construction2}
Let $\mathcal{C}$ be a binary $[1\times s,k]$ NRT-LCD code with generator matrix
\begin{equation}
G:=\left[\begin{array}{c}
g_{1}\\
g_{2}\\
\vdots\\
g_{k}
\end{array}\right]=\left[\begin{array}{cccc}
g_{1,1}&g_{1,2}&\ldots&g_{1,s}\\
g_{2,1}&g_{2,2}&\ldots&g_{2,s}\\
\vdots&\vdots&\ddots&\vdots\\
g_{k,1}&g_{k,2}&\ldots&g_{k,s}
\end{array}\right].
\end{equation}
Let $x=(x_1,x_2,\ldots,x_s)\in\mathbb{F}^{s}_{2}$, and $\mathcal{C}(x)$ be the binary $[1\times(s+2),k+1]$ NRT code with the following generator matrix
\begin{equation}
G(x)=\left[\begin{array}{l|lll|l}
1&x_{1}&\ldots&x_{s}&0\\
\hline
\langle x,g_1\rangle_{N}&g_{1,1}&\ldots&g_{1,s}&\langle x,g_1\rangle_{N}\\
\vdots&\vdots&\ddots&\vdots&\vdots\\
\langle x,g_k\rangle_{N}&g_{k,1}&\cdots&g_{k,s}&\langle x,g_k\rangle_{N}
\end{array}\right].
\end{equation}
If $s$ and $n$ are odd, $x_{\frac{s+1}{2}}=1$ and $x_{j}=0$ for $ j>\frac{s+1}{2}$. Then, the code $\mathcal{C}_{n}(x)$ generated by
\begin{equation}
G_{(n)}:=G(x,x,\ldots,x)=[G(x)\mid G(x)\mid\cdots\mid G(x)]
\end{equation}
is a binary $[n\times(s+2),k+1]$ NRT-LCD code.
\end{theorem}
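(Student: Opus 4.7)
My plan is to reduce the claim to Theorem \ref{construction1} by exploiting how the $\dag$-operation interacts with horizontal concatenation of identical blocks, together with the fact that we are working in characteristic $2$ with $n$ odd.

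First I would unpack the matrix $G_{(n)}^{\dag}$. Since $G_{(n)}=[G(x)\mid G(x)\mid\cdots\mid G(x)]$ with $n$ identical blocks of size $(k+1)\times (s+2)$, applying $Oflip$ gives $[flip(G(x))\mid flip(G(x))\mid\cdots\mid flip(G(x))]$, and then transposing stacks $n$ copies of $G(x)^{\circ}$ vertically:
\begin{equation*}
G_{(n)}^{\dag}=\left[\begin{array}{c}
G(x)^{\circ}\\
\hline
G(x)^{\circ}\\
\hline
\vdots\\
\hline
G(x)^{\circ}
\end{array}\right].
\end{equation*}

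Next I would compute the product $G_{(n)}G_{(n)}^{\dag}$ by block multiplication. Each of the $n$ block products contributes $G(x)G(x)^{\circ}$, so
\begin{equation*}
G_{(n)}G_{(n)}^{\dag}=\sum_{i=1}^{n}G(x)G(x)^{\circ}=n\cdot G(x)G(x)^{\circ}.
\end{equation*}
Because we work over $\mathbb{F}_{2}$ and $n$ is odd, we have $n\cdot G(x)G(x)^{\circ}=G(x)G(x)^{\circ}$.

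Finally I would invoke Theorem \ref{construction1}: under the hypotheses $s$ odd, $x_{\frac{s+1}{2}}=1$, and $x_{j}=0$ for $j>\frac{s+1}{2}$, that theorem shows $G(x)G(x)^{\circ}$ is nonsingular (indeed, it has the block-triangular form with a $1$ in the top-left corner and $GG^{\circ}$ in the bottom-right block, and $GG^{\circ}$ is nonsingular since $\mathcal{C}$ is NRT-LCD). Hence $G_{(n)}G_{(n)}^{\dag}$ is nonsingular, and by Lemma \ref{GGo} the code $\mathcal{C}_{n}(x)$ is NRT-LCD. Nonsingularity of $G_{(n)}G_{(n)}^{\dag}$ also forces the $k+1$ rows of $G_{(n)}$ to be linearly independent, so $\dim\mathcal{C}_{n}(x)=k+1$ and $\mathcal{C}_{n}(x)$ is the claimed $[n\times(s+2),k+1]$ NRT-LCD code.

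There is essentially no obstacle here: the only delicate point is getting the definition of $\dag$ right for the concatenated matrix so that the product collapses to $n$ copies of $G(x)G(x)^{\circ}$, after which the parity of $n$ does all the remaining work. The proof is therefore a short corollary of Theorem \ref{construction1} rather than an independent computation.
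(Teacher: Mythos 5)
Your proposal is correct and follows essentially the same route as the paper's own proof: identify $G_{(n)}^{\dag}$ as a vertical stack of $n$ copies of $G(x)^{\circ}$, collapse the block product to $n\cdot G(x)G(x)^{\circ}=G(x)G(x)^{\circ}$ over $\mathbb{F}_{2}$ since $n$ is odd, and invoke the nonsingularity of $G(x)G(x)^{\circ}$ from Theorem \ref{construction1} together with Lemma \ref{GGo}. Your explicit remarks on why $G(x)G(x)^{\circ}$ is nonsingular and on the dimension being $k+1$ are slightly more careful than the paper's wording, but the argument is the same.
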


\begin{proof}
It is clear that the length of $\mathcal{C}_{n}(x)$ is a binary $[n\times (s+2),k+1]$ NRT metric code.
Note that
\begin{equation}
G_{(n)}^{\dag}=\left[\begin{array}{c}
 G^{\circ}(x)\\
 G^{\circ}(x)\\
 \vdots\\
 G^{\circ}(x)
\end{array}\right].
\end{equation}
Thus,
\begin{eqnarray*}
G_{(n)}G^{\dag}_{(n)}&=&G(x)G^{\circ}(x)+G(x)G^{\circ}(x)+\cdots+G(x)G^{\circ}(x)\\
&=&G(x)G^{\circ}(x)
\end{eqnarray*}
since $n$ is odd. So, $G_{(n)}G^{\dag}_{(n)}$ is nonsingular. Proving that $\mathcal{C}_{n}(x)$ is a binary $[1\times(s+2),k+1]$ NRT-LCD code.
\end{proof}

\section{Conclusion}

In this paper, we studied the maximum minimum distance of a binary NRT-LCD code. In particular, we proved the existence of binary maximum distance separable $[1\times s, k]$ NRT-LCD codes for the cases where $k$ is even or $s$ and $k$ are odd. We derived conditions on $n$ for the existence of binary $[n\times 2,k]$ NRT-LCD codes. Finally, we present a linear programming bound for binary $[n\times 2,k]$ NRT-LCD codes and two new constructions. It would be interesting to find construction methods of maximum distance separable codes with small hulls for the parameters that we can not construct maximum distance separable NRT-LCD code.

\end{document}